\title{Reachability in Fixed Dimension \\ Vector Addition Systems with States} 
\titlerunning{Reachability in Fixed Dimension VASS} 
\author{Wojciech Czerwi\'nski}{University of Warsaw, Poland}{wczerwin@mimuw.edu.pl}{https://orcid.org/0000-0002-6169-868X}{Supported by the ERC grant LIPA, agreement no. 683080.}
\author{S{\l}awomir Lasota}{University of Warsaw, Poland}{sl@mimuw.edu.pl}{https://orcid.org/0000-0001-8674-4470}
{Supported by the NCN grant 2017/27/B/ST6/02093.}
\author{Ranko Lazi\'c}{University of Warwick, UK}{R.S.Lazic@warwick.ac.uk}{https://orcid.org/0000-0003-3663-5182}{Supported by EPSRC grant EP/P020992/1.}
\author{J\'er\^ome Leroux}{CNRS \& University of Bordeaux,
  France}{jerome.leroux@labri.fr}{}{Supported by ANR grant
  ANR-17-CE40-0028.}
\author{Filip Mazowiecki}{Max Planck Institute for Software Systems, Germany}{filipm@mpi-sws.org}{}{}
\authorrunning{W. Czerwi\'nski, S. Lasota, R. Lazi\'c, J. Leroux and F. Mazowiecki} 
\keywords{reachability problem, vector addition systems, Petri nets} 
\newcommand{\twocol}[5]{
\begin{minipage}{#1\linewidth} #2 \end{minipage}
#3
\begin{minipage}{#4\linewidth} #5 \end{minipage}
}
\newcommand{\moveup}{\vspace{-1mm}}
\newcommand{\movedown}{\vspace{1mm}}
\newcommand{\true}{\mathbf{true}}
\newcommand{\false}{\mathbf{false}}
\newcommand{\NP}{\text{\sc NP}}
\newcommand{\set}[1]{\{#1\}}
\newcommand{\lcmpar}[1]{\text{\sc Lcm}(#1)}
\newcommand{\bin}[1]{\text{\sc Bin}(#1)}
\newcommand{\binext}[2]{\text{\sc Bin}_{#1}(#2)}
\renewcommand{\O}{\mathcal{O}}
\newcommand{\N}{\mathbb{N}}
\newcommand{\Z}{\mathbb{Z}}
\newcommand{\weakexp}[3]{\mathcal{W}_{#1}}
\newcommand{\vv}{\mathbf{v}}
\newcommand{\ww}{\mathbf{w}}
\newcommand{\pspace}{\textsc{PSpace}\xspace}
\newcommand{\nl}{\textsc{NL}\xspace}
\newcommand{\np}{\textsc{NP}\xspace}
\newcommand{\expspace}{\textsc{ExpSpace}\xspace}
\newcommand{\hp}{\mathcal{HP}}
\newcommand{\subsum}{\textsc{Subset Sum}\xspace}
\newcommand{\inc}[1]{\add{#1}{1}}
\newcommand{\dec}[1]{\sub{#1}{1}}
\newcommand{\incordec}[1]{$#1 \,\, *\!\!= \, 1$}
\newcommand{\initialise}{\textbf{initialise to} $0$}
\newcommand{\coreadd}[2]{#1 \,\, +\!\!= \, #2}
\newcommand{\coresub}[2]{#1 \,\, -\!\!= \, #2}
\newcommand{\add}[2]{$\coreadd{#1}{#2}$}
\newcommand{\sub}[2]{$\coresub{#1}{#2}$}
\newcommand{\para}[1]{\subparagraph*{#1.}}
\def\BState{\State\hskip-\ALG@thistlm}
\algrenewcommand{\algorithmiccomment}[1]{\qquad$\rightarrow$ #1}
\newcommand{\goto}[2]{\textbf{goto} {\footnotesize #1} \textbf{or} {\footnotesize #2}}
\newcommand{\gotod}[1]{\textbf{goto} {\footnotesize #1}}
\newcommand{\halt}{\textbf{halt}}
\newcommand{\haltz}[1]{{\halt} \textbf{if} $#1 = 0$}
\newcommand{\vr}[1]{\mathsf{#1}}
\newcommand*{\eg}{e.g.\@\xspace}
\begin{document}

\maketitle

\begin{abstract}
The reachability problem is a central decision problem in verification of vector addition systems with states (VASS).
In spite of recent progress, the complexity of the reachability problem remains unsettled, and it is closely related to the lengths of shortest VASS runs that witness reachability.

We obtain three main results for VASS of fixed dimension.  For the first two, we assume that the integers in the input are given in unary, and that the control graph of the given VASS is flat (i.e., without nested cycles).  We obtain a family of VASS in dimension~$3$ whose shortest runs are exponential, and we show that the reachability problem is \np-hard in dimension~$7$.  These results resolve negatively questions that had been posed by the works of Blondin et al.\ in LICS 2015 and Englert et al.\ in LICS 2016, and contribute a first construction that distinguishes $3$-dimensional flat VASS from $2$-dimensional ones.
Our third result, by means of a novel family of products of integer fractions, shows that $4$-dimensional VASS can have doubly exponentially long shortest runs.  The smallest dimension for which this was previously known is~$14$.

\end{abstract}

\section{Introduction}
\label{sec:introduction}

\para{Context}

Vector addition systems with states (shortly, VASS) \cite[cf.\ Section~5.1]{Greibach78a}, \cite{HopcroftP79}, vector addition systems without states (shortly, VAS)~\cite{KarpM69}, and Petri nets~\cite{Petri62}, are equally expressive with well-known straightforward mutual translations.  They form a long established model of concurrency with extensive applications in modelling and analysis of hardware \cite{BurnsKY00,LerouxAG15}, software \cite{GermanS92,BouajjaniE13,KKW14} and database \cite{BojanczykDMSS11,BojanczykMSS09} systems, as well as chemical~\cite{AngeliLS11}, biological \cite{PelegRA05,BaldanCMS10} and business \cite{Aalst15,LiDV17} processes (where the references are illustrative).

Two central decision problems in the context of formal verification based on that model are the following.  Stated in terms of the first formalism, the input of both problems is a VASS~$\mathcal{V}$, and two configurations $p(\vv)$ and~$q(\ww)$.
\begin{description}
\item[Coverability]
asks whether $\mathcal{V}$ has a run starting at $p(\vv)$ and finishing at some configuration $q(\ww')$ such that $\ww' \geq \ww$.  Thus the final configuration of the run needs to have control that is in the given target state~$q$ and resources that are component-wise no smaller than the given target vector~$\ww$.  In applications, $q(\ww)$ is typically seen as a minimal unsafe configuration, and the coverability problem is fundamental for verifying safety properties.
\item[Reachability]
asks whether $\mathcal{V}$ has a run starting at $p(\vv)$ and finishing at~$q(\ww)$.  Thus the run needs to reach the given target configuration exactly.  It has turned out that verification of liveness properties amounts to solving the reachability problem~\cite{Hack74}.  Moreover, a plethora of problems from formal languages~\cite{Crespi-ReghizziM77}, logic~\cite{Kanovich95,DemriFP16,DeckerHLT14,ColcombetM14}, concurrent systems~\cite{GantyM12,EsparzaGLM17}, process calculi~\cite{Meyer09}, linear algebra~\cite{HL18} and other areas (the references are again illustrative, cf.\ Schmitz's recent survey~\cite{Schmitz16siglog}) are inter-reducible with the reachability problem.
\end{description}
\noindent
The coverability problem was found \expspace-complete already in the 1970s~\cite{lipton76,Rackoff78}, and the reachability problem was proved decidable in the early 1980s~\cite{Mayr84}.  However, the complexity of the latter has become one of the most studied open questions in the theory of verification.  The best upper and lower bounds are both very recent, and are given by an Ackermannian function~\cite{Schmitz18cigar} and a tower of exponentials~\cite{CzerwinskiLLLM19}, respectively.


\para{Fixed Dimension VASS}

The gaps in the state of the art on the complexity of the reachability problem are particularly vivid when the dimension is fixed.  For concreteness, we focus on VASS, bearing in mind that corresponding statements in terms of VAS or Petri nets can be obtained by means of standard translations (we refer to~\cite[Section~2.1]{Schmitz16siglog} for details, noting that in some cases the dimension is affected by a small additive constant).
The only broadly settled cases are for dimensions $1$ and $2$, as shown in the following table, where `unary' and `binary' specify how the integers in the input to the reachability problem are encoded.
\begin{center}
\begin{tabular}{r|c|c|}
              & unary VASS                      & binary VASS                           \\ \hline
dimension~$1$ & \nl-complete~\cite{ValiantP75}  & \np-complete~\cite{HaaseKOW09}        \\ \hline
dimension~$2$ & \nl-complete~\cite{EnglertLT16} & \pspace-complete~\cite{BlondinFGHM15} \\ \hline
\end{tabular}
\end{center}
%
For dimensions $d \geq 3$, the best known bounds are from \cite{Schmitz18cigar} and \cite{CzerwinskiLLLM19}, namely membership of the fast-growing primitive recursive class $\mathbf{F}_{d + 4}$ and hardness for $(d - 13)$-\expspace when $d \geq 13$, respectively, which hold with both unary and binary encodings.  In particular, for $3 \leq d < 13$, no better lower bounds have been known than \nl for unary VASS and \pspace for binary VASS, whereas the $\mathbf{F}_{d + 4}$ upper bound is far above elementary already for $d = 3$.


\para{Flat Control}

The structural restriction of flatness, which is essentially that the control graph contains no nested cycles, has long played a prominent role in a number of settings in verification, cf.\ e.g.~\cite{ComonC00}.  In fact, all the tight upper bounds for dimensions $1$ and $2$ recalled above can be seen as due to the effective flattability of $2$-dimensional VASS~\cite{LerouxS04}.
Regarding the complexity of reachability for flat VASS, there has been a marked contrast in the state of the art depending on the encoding.
\begin{description}
\item[Binary:]
Thanks to reducibility to existential Presburger arithmetic~\cite{FribourgO97,BlondinFGHM15}, we have \np membership, even when the dimension is not fixed.  And already for dimension~$1$, we have \np hardness.
\item[Unary:]
With the exception of dimensions $1$ and $2$ for which we have the \nl memberships, no better upper bound than \np has been known in dimension $3$ or higher.  And for any fixed dimension, no better lower bound than \nl has been obtained.
\end{description}
Interestingly, from the results of Rosier and Yen~\cite{RosierY86}, we have that the coverability problem for fixed dimension flat VASS is in \np with the binary encoding and in \nl with the unary encoding, which not provably better than the reachability problem as just discussed.


\para{Main Results}

The \nl memberships of reachability for unary VASS in dimension~$2$ and of coverability for unary VASS in any fixed dimension were obtained by proving that polynomially bounded witnessing runs always exist.  It is therefore pertinent to ask:
\moveup
\begin{quote}
\emph{Do polynomially bounded witnessing runs exist for reachability for unary flat VASS in fixed dimensions greater than~$2$?}
\end{quote}
\moveup
Our first main result, presented in Section~\ref{sec:exponential}, provides a negative answer immediately in dimension~$3$.  We believe this is very significant for the continuing quest to understand the reachability problem, for which as we have seen there is currently a huge complexity gap already in dimension~$3$.  Namely, $3$-dimensional VASS have so far been distinguished from $2$-dimensional VASS only by means of the infamous example of Hopcroft and Pansiot~\cite[proof of Lemma~2.8]{HopcroftP79}, which shows that, in contrast to the latter, the former do not have semi-linear reachability sets and are hence not flattable.  However, we now have a new distinguishing feature which is present even under the restriction of flatness.

Even if polynomially bounded witnessing runs do not exist, it is conceivable that the decision problem nevertheless has low complexity, so we next ask:
\moveup
\begin{quote}
\emph{Is reachability for unary flat VASS in \nl in fixed dimensions greater than~$2$?}
\end{quote}
\moveup
We show that this is unlikely in Section~\ref{sec:np}, where our second main result establishes \np hardness in dimension~$7$.  This provides the first concrete indication that the reachability problem is harder than the coverability problem for fixed dimension flat VASS.

Lastly, we turn to binary VASS in fixed dimensions~$d$, where without the flat assumption, the enormous complexity gap between \pspace hardness and $\mathbf{F}_{d + 4}$ membership remains for $3 \leq d \leq 13$.  Given that exponentially bounded witnessing runs exist for $d = 2$~\cite{EnglertLT16} (which yields \pspace membership) but not for $d = 14$~\cite{CzerwinskiLLLM19}, we ask:
\moveup
\begin{quote}
\emph{Do exponentially bounded witnessing runs exist for reachability for binary VASS in fixed dimensions from $3$ to~$13$?}
\end{quote}
\moveup
A negative answer is provided in Section~\ref{sec:fractions} by our third main result, which exhibits a family of $4$-dimensional VASSes whose shortest witnessing runs are doubly exponentially long.


\para{Technical Contributions}

In all three of the main results, we make use of a key technical pattern first seen in~\cite{CzerwinskiLLLM19}, namely checking divisibility of a counter~$\vr{x}$ by a large integer as follows: ensure that a counter~$\vr{y}$ is initially equal to $\vr{x}$, then multiply $\vr{x}$ weakly (which a priori may nondeterministically produce an erroneous smaller result) by many integer fractions greater $1$ whose product is $c / d$, and finally verify that $\vr{x} = \vr{y} \cdot (c / d)$ by subtracting $c$ from $\vr{x}$ and $d$ from $\vr{y}$ repeatedly until they are both~$0$.  The divisibility by the large integer is ensured because the check succeeds if and only if the weak multiplications are all exact.  However, much additional development has been involved:
\begin{enumerate}
\item
For the exponentially long shortest runs in Section~\ref{sec:exponential}, we employ the factorial fractions also seen in~\cite{CzerwinskiLLLM19}, but in reverse order, with the construction stripped to its essentials to minimise the dimension, and with a detailed divisibility analysis of large integers.
\item
The \np hardness in Section~\ref{sec:np} builds on the development in the previous section, adding careful machinery that facilitates exact computations on exponentially large integers.
\item
To obtain the doubly exponentially long shortest runs in Section~\ref{sec:fractions}, we have developed an intricate new family of sequences of fractions, where in contrast to the much simpler factorial equations, the number of distinct fractions in a sequence is logarithmic in relation to both the numerators and the denominators as well as to the length of the sequence.
\end{enumerate}




\section{Preliminaries}
\label{sec:preliminaries}

\para{Vector Addition Systems with States}
A \emph{vector addition system with states} in dimension $d$ ($d$-VASS, or simply VASS if the dimension is irrelevant) 
is a pair $\mathcal{V} = (Q,T)$ consisting of a finite set $Q$ of states and a finite set of transitions $T \subseteq Q \times \Z^d \times Q$. 
The size of a VASS is $|Q| + |T| \cdot s$, where $s$ is the maximum on the representation size of a vector in $T$.
A \emph{configuration} of a $d$-VASS is a pair $(p,\vv) \in Q \times \N^d$, denoted $p(\vv)$, consisting of a state $p$ and
a nonnegative integer vector $\vv$. 
A run of a $d$-VASS is a sequence of configurations
\moveup\moveup
\begin{align} \label{eq:run}
p_0(\vv_0), \ \ldots, \ p_k(\vv_k),
\end{align}
such that for every $1\leq i \leq k$ there is a transition $\alpha_i = (p_{i-1},\ww_i,p_{i})$ $\in T$ satisfying $\vv_{i-1} + \ww_i = \vv_{i}$.
The sequence of transitions
$\alpha_1, \ \ldots, \ \alpha_k$
we call the \emph{path} of the run~\eqref{eq:run}.

We are interested in the complexity of the \emph{reachability problem}: given a $d$-VASS and two configurations $p(\vv)$, $q(\ww)$ does there exist a run from $p(\vv)$ to $q(\ww)$. W.l.o.g.~we can restrict $\vv = \ww = 0$ to be the zero vectors, 
as the general case polynomially reduces to such restricted case. Indeed, it suffices to add a new initial state whose only out-going
transition adds $\vv$, and likewise a new final state whose only in-going transition subtracts $\ww$.
In the sequel we usually assume that VASS is additionally equipped with a pair of configurations,
a source $p(\vv)$ and a target $q(\ww)$, thus $\mathcal{V} = (Q,T, p(\vv), q(\ww))$. 
Thus we do not distinguish between a VASS and a VASS reachability instance.
Runs from
$p(\vv)$ to $q(\ww)$ we call  \emph{halting runs} of $\mathcal V$.

We study the reachability problem under two further restrictions. 
The first restriction assumes that the dimension $d$ is fixed. In this case it may matter,
for the complexity of the reachability problem, whether the numbers appearing in the vectors in $T$ are encoded in unary or binary. 
We will thus distinguish these two cases, and speak of unary, respectively binary VASS.
Note that in the unary case one can assume w.l.o.g.~all vectors in $T$ to be either the zero vector, or 
the unit vector $e_i = (0,\ldots,0,1,0,\ldots,0)$ with single $1$ on some $i$-th coordinate, or inverse $-e_i$ thereof. 
The second restriction is \emph{flatness} and concerns cycles in runs (see \eg~\cite{LerouxS04,BlondinFGHM15}). 
The path $\alpha_1, \ \ldots, \ \alpha_k$ of a
run~\eqref{eq:run} is called \emph{simple} if there is no repetition of states along the path;
it is called \emph{simple cycle} if there is no repetition of states along the path except for the first and the last states which are equal: 
$p_0 = p_k$. 
A VASS is \emph{flat} if every state admits at most one simple cycle
on it (i.e., the VASS has no nested cycles).

%
%


\para{Counter Programs}
We are going to represent VASSes by counter programs.
A \emph{counter program} is a numbered sequence of commands of the following types:
\moveup
\begin{quote}
\begin{tabular}{l@{\qquad}l}
\add{\vr{x}}{n}                       & (increment counter~$\vr{x}$ by $n$) \\
\sub{\vr{x}}{n}                       & (decrement counter~$\vr{x}$ by $n$) \\
\goto{$L$}{$L'$}                   & (jump to either line~$L$ or line~$L'$)
\end{tabular}
\end{quote}
\moveup
except that the first and the last command of the program, respectively, are of the form
\moveup
\begin{quote}
\begin{tabular}{l@{\qquad}l}
\initialise & (initialise all counters to zero); \\
\haltz{\vr{x_1}, \ldots, \vr{x}_l} & (terminate provided all listed counters are zero).
\end{tabular}
\end{quote}
\moveup
(We note that in the unary case, increments \add{\vr{x}}{m} and decrements \sub{\vr{x}}{m} can be written as $m$ 
consecutive unitary increments \inc{\vr{x}} and decrements \dec{\vr{x}}, respectively, introducing only linear blow-up.
In the binary case this would lead to an exponential blow-up.)
Indeed, a counter program $\mathcal{P}$ represents a VASS (in fact, a VASS reachability instance) 
of dimension equal to the number of counters used in $\mathcal{P}$,
with a separate state for every line in $\mathcal{P}$.
The increment and decrement commands in $P$ are simulated by transition vectors of the VASS. 
The source and target configurations of the VASS correspond to the first and last line of
$\mathcal{P}$.
The size of the VASS is linear with respect to the size of the program.
This convenient representation was adopted e.g.~in~\cite{Esparza98,CzerwinskiLLLM19}.

Accordingly with runs of a VASS, we speak of runs of a counter program (in particular, values of counters along a run are nonnegative)
with the proviso that the initial value of all counters is 0.
A run is \emph{halting} if it has successfully executed its (necessarily last) {\halt} command; otherwise, the run is \emph{partial}.
The reachability problem for a VASS translates into the question whether there exists a halting run in a counter program.

Note that a counter program does not need to test for zero all counters in the final {\halt} command;
for the sake of presentation it is convenient to allow for halting runs with non-zero final value of certain (irrelevant) counters.
On the other hand,
formally, our intention is that a counter program represents a VASS reachability instance 
with the zero target vector.
This incompatibility can be circumvented by assuming that counter programs are implicitly
completed with additional loops allowing 
to decrease every untested counter just before executing the \textbf{halt} command.



In case of fragments of counter programs which neither start with \textbf{initialise} nor end with \textbf{halt},
we consider explicit \emph{initial} and \emph{final} values of counters.
Note however that due to nondeterministic \textbf{goto} command, final values are not uniquely determined by initial ones.

When writing counter program we use a syntactic sugar:
we write \gotod{$L$} instead of \goto{$L$}{$L$}, and 
whenever a program repeats the block of commands in line \ref{l:testz.iter} 
some nondeterministically chosen number of times (possibly zero, possibly infinite), 
as shown on the left, we use a shorthand as shown on the right:

\twocol{.40}{
\movedown\movedown
\begin{quote}
\begin{algorithmic}[1]
\State \goto{\ref{l:testz.exit}}{\ref{l:testz.iter}} \label{l:testz.rep}
\State <iterated commands>\label{l:testz.iter}
\State \gotod{\ref{l:testz.rep}}
\State <remaining commands>                      \label{l:testz.exit}
\end{algorithmic}
\end{quote}
}
{\qquad\qquad}{.50}{
\begin{quote}
\begin{algorithmic}[1]
\Loop
  \State <iterated commands>
\EndLoop
\State <remaining commands>
\end{algorithmic}
\end{quote}
}
\movedown\movedown

\noindent
In the sequel we will only occasionally use \textbf{goto} commands \emph{explicitly}.
Observe that a counter program without explicit \textbf{goto} commands, but using \emph{unnested} 
\textbf{loop} commands (which \emph{implicitly} use \textbf{goto} commands), always represents a flat VASS.

\twocol{.35}{
\moveup
\begin{algorithm}[H]
\caption{Weak multiplication by $\frac{c}{d}$, for $c > d$.}
\label{alg:weak-multiplication}
\begin{algorithmic}[1]
\Loop
\State \sub{\vr{x}}{1} \quad \add{\vr{y}}{1}
\EndLoop
\Loop \label{l:weak-second-loop}
\State \add{\vr{x}}{c} \quad \sub{\vr{y}}{d}
\EndLoop
\end{algorithmic}
\end{algorithm}
\movedown
}
{\qquad}{.55}{
We end this section with examples of counter programs that
\emph{weakly} compute a number $b$ in some counter $\vr{x}$, i.e.,
all runs end with $\vr{x} \leq b$, and there is a run that ends with $\vr{x} = b$.
On the way we also introduce macros to be used later to facilitate writing complex programs.
As a preparation, consider the program in Algorithm~\ref{alg:weak-multiplication} which weakly multiplies  
the initial value of $\vr{x}$ by $\frac{c}{d}$.
}

\moveup\moveup\noindent
Let $x_0, y_0$ and $x_1, y_1$ be initial and final values, respectively, of counters $\vr{x}$, $\vr{y}$.
We claim that the sum of final values is at most $\frac{c}{d}$ times larger than the sum of initial values.
Moreover, it is exactly $\frac{c}{d}$ times larger if, and only if, both loops are iterated \emph{maximally}: 
the first loop exits only when the counter $\vr{x}$, decreased in its every iteration, reaches the minimal possible value $0$;
and likewise the second loop exits only when the counter $\vr{y}$ reaches $0$.
Enforcing maximal iteration of loops will be our fundamental technical objective in the sequel.
\begin{claim}\label{claim:weak-multiplication}
Let $x', y'$ be the values of counters $\vr{x},\vr{y}$ at the exit from the first loop.
Then $x_1 + y_1 \leq (x_0 + y_0) \cdot \frac{c}{d}$. Moreover, $x_1 = (x_0 + y_0) \cdot \frac{c}{d}$ if, and only if,
$x' = y_1 = 0$.
\end{claim}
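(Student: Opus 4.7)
The plan is to extract one invariant from each of the two loops and then finish with a short bookkeeping calculation, so this will really be a clean arithmetic argument rather than anything subtle.

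First I would observe that every iteration of the first loop decrements $\vr{x}$ and increments $\vr{y}$ by $1$, so the quantity $\vr{x}+\vr{y}$ is preserved throughout. Hence $x' + y' = x_0 + y_0$, regardless of how many iterations the loop executes. Let $m$ denote the number of iterations of the second loop (this is nondeterministic, subject only to the constraint that no counter goes negative). Since each iteration changes $\vr{x}$ by $+c$ and $\vr{y}$ by $-d$, we have $x_1 = x' + mc$ and $y_1 = y' - md$, with the nonnegativity requirement $md \leq y' = (x_0+y_0)-x'$.

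Adding the two expressions yields $x_1 + y_1 = (x_0+y_0) + m(c-d)$. Using $md \leq x_0+y_0$ and $c>d$, this is at most $(x_0+y_0) + (x_0+y_0)(c-d)/d = (x_0+y_0)\cdot c/d$, which gives the inequality part of the claim. For the equivalence, assume $x_1 = (x_0+y_0)\cdot c/d$. The inequality just established forces $y_1 \leq 0$, and together with $y_1 \geq 0$ we conclude $y_1 = 0$. From $y_1 = 0$ we get $md = (x_0+y_0)-x'$, hence
\[
x_1 \;=\; x' + mc \;=\; x' + \frac{c}{d}\bigl((x_0+y_0) - x'\bigr) \;=\; (x_0+y_0)\cdot\frac{c}{d} \;-\; x'\cdot\frac{c-d}{d}.
\]
Setting the right-hand side equal to $(x_0+y_0)\cdot c/d$ and using $c>d$ forces $x' = 0$. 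For the converse, if $x' = 0$ then $y' = x_0+y_0$, and if additionally $y_1 = 0$ then $m = (x_0+y_0)/d$, so $x_1 = mc = (x_0+y_0)\cdot c/d$.

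I do not anticipate a real obstacle here; the only point that needs care is that the two ``maximal iteration'' conditions cleanly correspond to the two equalities $x' = 0$ (first loop drained $\vr{x}$ completely into $\vr{y}$) and $y_1 = 0$ (second loop consumed $\vr{y}$ completely), and that the nonnegativity-of-counters constraint is precisely what makes the multiplication weak: any premature exit from either loop strictly decreases $x_1$ below the target $(x_0+y_0)\cdot c/d$ via the ``loss'' term $x'(c-d)/d$ or an analogous residual in $y_1$.
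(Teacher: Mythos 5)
Your proof is correct and takes essentially the same approach as the paper: preserve $\vr{x}+\vr{y}$ across the first loop, parameterize the second loop by its iteration count $m$ with the nonnegativity constraint $md \leq y'$, and then do the bookkeeping; the only cosmetic difference is that the paper argues the ``only if'' direction by contrapositive (if $y_1\neq 0$ or $x'\neq 0$ then strict inequality), whereas you derive $y_1=0$ and then $x'=0$ directly from the assumed equality. Same argument, equivalent presentation.
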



\moveup\moveup\moveup
\twocol{.60}{
\begin{algorithm}[H]
\caption{Program fragment $\weakexp{b}{\vr{x}}{\vr{x}'}$.} 
\label{alg:weakly}
\begin{algorithmic}[1]
\State \initialise
\For {\, $i$ \, := \, $m$ \, \textbf{downto } $0$}\label{l:for}
\Loop \label{l:weakly1}
\State \sub{\vr{x}}{1} \quad \add{\vr{y}}{1}
\EndLoop
\Loop
\State \add{\vr{x}}{2} \quad \sub{\vr{y}}{1}
\EndLoop \label{l:weakly2}
\If{$b_i = 1$}{\, \inc{\vr{x}}}\label{l:if}
\EndIf
\EndFor
\end{algorithmic}
\end{algorithm}
\moveup\moveup
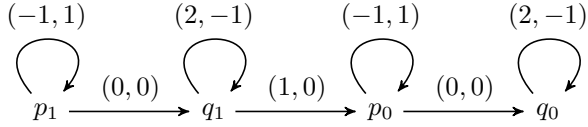
\begin{figure}[H]

\begin{tikzpicture}[->,>=stealth',shorten >=1pt,auto,node distance=2.2cm,semithick]

\node (p0) {$p_1$};
\node (q0) [right of=p0] {$q_1$};
\node (p1) [right of=q0] {$p_0$};
\node (q1) [right of=p1] {$q_0$};

%



\path[->]
(p0) edge[->, in=50, out=130, min distance=0.5cm,loop] node {$(-1,1)$}(p0)
(q0) edge[->, in=50, out=130, min distance=0.5cm,loop] node {$(2,-1)$}(q0)
(p1) edge[->, in=50, out=130, min distance=0.5cm,loop] node {$(-1,1)$}(p1)
(q1) edge[->, in=50, out=130, min distance=0.5cm,loop] node {$(2,-1)$}(q1)
(p0) edge[->] node[above] {$(0,0)$} (q0)
(q0) edge[->] node[above] {$(1,0)$} (p1)
(p1) edge[->] node[above] {$(0,0)$} (q1)
%
;
\end{tikzpicture}
\caption{A 2-VASS represented by the program $\weakexp{2}{\vr{x}}{\vr{x}'}$. The first coordinate corresponds to the value of counter $\vr{x}$ and the second one to the value of counter $\vr{y}$.}\label{fig:vassweak}
\end{figure}
\moveup\moveup
}
{\qquad}{.30}{ \vspace{-24mm}
\begin{algorithm}[H]
\caption{Unfolding of macros in $\weakexp{2}{\vr{x}}{\vr{x}'}$.} 
\label{alg:unf}
\begin{algorithmic}[1]
\State \initialise

\Loop 
\State \sub{\vr{x}}{1} \quad \add{\vr{y}}{1}
\EndLoop
\Loop
\State \add{\vr{x}}{2} \quad \sub{\vr{y}}{1}
\EndLoop 

\State \inc{\vr{x}}

\Loop 
\State \sub{\vr{x}}{1} \quad \add{\vr{y}}{1}
\EndLoop
\Loop
\State \add{\vr{x}}{2} \quad \sub{\vr{y}}{1}
\EndLoop 
\end{algorithmic}
\end{algorithm}
}
The counter program $\weakexp{b}{\vr{x}}{\vr{x}'}$ shown in~Algorithm~\ref{alg:weakly} weakly computes a number $b$,
assuming that $b_m \dots b_0 = \bin{b}$ is the binary representation of $b$ (the oldest bit $b_m = 1$).
The \textbf{halt} command is omitted as no zero-testing is relevant in this example.
We use \textbf{for} and \textbf{if then} preprocessing macros with the following semantics.
The macro
\moveup
\begin{quote}
\begin{algorithmic}
\For {\, $i$ \, := \, $m$ \, \textbf{downto } $0$}
\quad <program fragment>
\EndFor
\end{algorithmic}
\end{quote}
\moveup
is understood as $(m+1)$-fold repetition of copies of <program fragment>:
\moveup
\begin{quote}
\begin{algorithmic}
  \State <program fragment>  $\qquad (i = m)$
  \State <program fragment>  $\qquad (i = m-1)$
  \State $\ldots$
  \State <program fragment>  $\qquad (i = 0)$
\end{algorithmic}
\end{quote}
for $i = m, m-1, \ldots, 0$.
It is important that $i$ is not a counter but a meta-variable that is treated as a constant in every program fragment. By convention we use different fonts for counters and meta-variables:
$\vr{i}$ is a counter while $i$ is a meta-variable.
Furthermore in every copy, say for $i=k$,
%
%
at every appearance of the macro
{\bf if} $\varphi(i)$ {\bf then} <optional program fragment>,
%
%
the formula $\varphi(i)$ is evaluated and, if it evaluates positively then macro is replaced by  
<optional program fragment>, otherwise it is removed.
%
%
Specifically, consider for example $m = 1$ and $b = 2$, i.e.,  $b_1 = 1$ and $b_0 = 0$. 
Unfolding of the macros appearing in $\weakexp{2}{\vr{x}}{\vr{x}'}$ yields the
counter program shown in Algorithm~\ref{alg:unf}. Figure~\ref{fig:vassweak} shows the corresponding 2-VASS.
We remark that the programs in Algorithm~\ref{alg:weak-multiplication} and Algorithm~\ref{alg:weakly} 
represent flat VASS.

Clearly, we do not want \textbf{for} and \textbf{if then} to be full-fledged commands operating on program counters, as this would make counter programs as powerful as Minsky machines, hence undecidable. 
They are just pre-processing macros that operate on meta-variables $i$ only, and constitute 
syntactic sugar helpful in writing repetitive program fragments.

\begin{proposition}\label{prop:weak}
The program $\weakexp{b}{\vr{x}}{\vr{x}'}$ weakly computes $b$. 
\end{proposition}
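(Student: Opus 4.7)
The plan is to prove Proposition~\ref{prop:weak} by downward induction on the index $i$ of the outer \textbf{for} loop, using Claim~\ref{claim:weak-multiplication} (with $c = 2$ and $d = 1$) as the core ingredient at each iteration. Let $T_k$ denote the value of $\vr{x} + \vr{y}$ immediately after the body of the loop for $i = k$ has been executed, i.e.\ after the conditional increment in line~\ref{l:if}; set $T_{m+1} = 0$ for the state just before the very first iteration.

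For the upper bound, I would show by downward induction on $k$ that $T_k \leq \sum_{j=k}^{m} 2^{j-k} b_j$. The base case $k = m+1$ holds since the counters are initialised to zero. For the step, Claim~\ref{claim:weak-multiplication} applied to the two inner loops (lines~\ref{l:weakly1}--\ref{l:weakly2}) with initial sum $T_{k+1}$ gives that $\vr{x} + \vr{y} \leq 2 T_{k+1}$ after those loops; the conditional increment adds at most $b_k$ to $\vr{x}$, so $T_k \leq 2 T_{k+1} + b_k$, and unfolding yields the claimed bound. Taking $k = 0$ we get $T_0 \leq \sum_{j=0}^{m} 2^{j} b_j = b$, and since $\vr{x} \leq T_0$ at termination, every run of $\weakexp{b}{\vr{x}}{\vr{x}'}$ ends with $\vr{x} \leq b$.

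For achievability, I would construct a run realising equality at every step by making the nondeterministic \textbf{goto}s iterate both inner loops maximally in each iteration. By the ``moreover'' part of Claim~\ref{claim:weak-multiplication}, running the first loop until $\vr{x}$ is exhausted (so $x' = 0$) and the second until $\vr{y}$ is exhausted (so $y_1 = 0$) forces $\vr{x}$ to equal exactly $2 T_{k+1}$ and $\vr{y}$ to equal $0$ after the two loops. The conditional increment (or its absence) then propagates the equality $T_k = 2 T_{k+1} + b_k$ while preserving $\vr{y} = 0$, so after the iteration for $i = 0$ the counter $\vr{x}$ holds exactly $b$.

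The only subtle point, and the main obstacle to pin down carefully, is that the auxiliary counter $\vr{y}$ is \emph{not} reset between iterations, so reasoning purely about $\vr{x}$ is insufficient; the invariant must be formulated in terms of the sum $\vr{x} + \vr{y}$. This also explains why the achievability direction forces the maximal-iteration choice at \emph{every} step: attaining equality at iteration $k$ is precisely what leaves $\vr{y} = 0$ at the start of iteration $k-1$, which is in turn needed to attain equality there via Claim~\ref{claim:weak-multiplication}. Once the invariant is set up this way, both directions follow by a routine induction.
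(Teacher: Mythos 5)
Your proposal is correct and follows the same approach as the paper: repeatedly apply Claim~\ref{claim:weak-multiplication} (with $c=2$, $d=1$) to each iteration of the \textbf{for} loop. The paper's own proof is a one-liner that leaves the induction implicit; your version fleshes it out, and in particular you correctly identify and handle the subtlety that the paper elides, namely that $\vr{y}$ is not reset between iterations, so the invariant must be phrased in terms of $\vr{x}+\vr{y}$ (with $\vr{y}=0$ being forced at each stage of the maximal run exactly by attaining equality in Claim~\ref{claim:weak-multiplication}).
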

\begin{proof}
By Claim~\ref{claim:weak-multiplication}, the program $\weakexp{b}{\vr{x}}{\vr{x}'}$  weakly multiplies 
$\vr{x}$ by $2$ in lines~\eqref{l:weakly1}--\eqref{l:weakly2}. 
Combining this with addition of a bit in~\eqref{l:if} gives weak computation of $b$.
\end{proof}

\section{Exponential Shortest Runs}
\label{sec:exponential}

\begin{theorem}\label{th:exponential}
There is a family of unary flat 3-VASS $(\mathcal{V}_n)_{n\in\N}$ of size $\O(n^2)$ such that every 
halting run of $\mathcal{V}_n$ is of length exponential in $n$.
%
\end{theorem}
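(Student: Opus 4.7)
The plan is to realise $\mathcal V_n$ as a counter program $\mathcal P_n$ with three counters $\vr x,\vr y,\vr z$, implementing a factorial round-trip. First I would push $\vr x$ up along the factorials $1,2,6,\ldots,n!$ by concatenating $n-1$ weak-multiplication gadgets from Algorithm~\ref{alg:weak-multiplication} with $c=k,d=1$ for $k=2,\ldots,n$, using $\vr y$ as scratch. Then, applying the factorial fractions in \emph{reverse order}, I would bring $\vr x$ back down to $1$ by concatenating gadgets with $c=1,d=k$ for $k=n,n-1,\ldots,2$. A short tail decrements $\vr x$ by $1$ and halts, with the third counter $\vr z$ involved inside each gadget by dedicated transitions so that the halt constraint $\vr z=0$ encodes a divisibility invariant on $\vr x$ at every step.

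The construction has size $\O(n^2)$ because each gadget for parameter $k$ uses $\O(k)$ unary commands, summed over both phases. The VASS is flat since each gadget is a pair of non-nested simple cycles, chained sequentially, giving dimension~$3$. Under a fully maximal execution, $\vr x$ climbs through $1,2,6,24,\ldots,n!$ and descends symmetrically, with the $k$-th gadget on either side running $\Theta((k-1)!\cdot k)$ transitions; the total run length is $\Theta(n!)$, which is $\Omega(2^n)$. This shows that at least one halting run is of exponential length.

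The heart of the argument is the lower bound: every halting run is forced to be maximal at every gadget. For the build-up phase, Claim~\ref{claim:weak-multiplication} yields $\vr x+\vr y\le n!$ after the last multiplication, with equality iff every up-gadget is maximal. The wind-down phase is more delicate because Claim~\ref{claim:weak-multiplication} applies only to the case $c>d$. Here I would prove by induction that after the $k$-th down-gadget, $\vr x$ must be exactly $(k-1)!$: the upper bound $\vr x\le(k-1)!$ comes from a dual version of the claim, and the matching divisibility lower bound (that $\vr x$ is a multiple of $(k-1)!$) is forced by the transitions of $\vr z$ inside each down-gadget. Combining these with the sum bound from the build-up phase then implies that every gadget of both phases is executed maximally, and hence the run has length $\Theta(n!)$.

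The main obstacle is this divisibility analysis for the wind-down phase, since a fine control is needed over how mass can be rearranged between $\vr x$ and $\vr y$ by partial executions of the weak divisions. In fact, the two-counter sub-program on $\vr x,\vr y$ alone necessarily admits short halting runs, in line with the $\nl$-completeness of unary $2$-VASS reachability~\cite{EnglertLT16}; the third counter $\vr z$ must therefore genuinely block these shortcuts. The technical work lies in designing the $\vr z$-transitions in each gadget so that only the maximal execution can drive $\vr z$ to $0$, and verifying this claim by a careful case analysis on how residuals can be carried across consecutive gadgets.
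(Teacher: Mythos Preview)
The proposal has a genuine gap: the crucial ingredient --- the role of the third counter $\vr z$ --- is never actually specified. You commit $\vr y$ to be the scratch counter for the weak multiplications, so the entire lower-bound argument rests on ``$\vr z$-transitions in each gadget so that only the maximal execution can drive $\vr z$ to $0$''; but you do not say what these transitions are, and you yourself label this ``the main obstacle'' and leave it as ``technical work'' to be done. Without such a mechanism the construction fails at once: starting from $\vr x=1$, the run that iterates every loop zero times keeps $\vr x=1$ through both phases and halts after a constant number of steps. More structurally, because your initial value is \emph{fixed} rather than guessed, you are forced to enforce maximality locally at each of $\Theta(n)$ gadgets with a single spare counter in a flat VASS, and it is far from clear this is possible. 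Your wind-down gadgets ($c=1,d=k$) also violate the hypothesis $c>d$ of Claim~\ref{claim:weak-multiplication}, so even the ``dual'' upper bound you need for the inductive step has no stated basis.

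The paper's construction sidesteps all of this by a different mechanism. It does \emph{not} start from a fixed value and does not try to force maximality gadget by gadget. Instead, an initial loop lets the run nondeterministically choose a value $x_{n+1}$, simultaneously stored in a witness counter $\vr y$ that is never modified again; then $\vr x$ is weakly multiplied by the telescoping fractions $\tfrac{n+1}{n},\tfrac{n}{n-1},\ldots,\tfrac{2}{1}$ (total product $n+1$) using $\vr z$ merely as scratch; and a final loop tests the single global equality $\vr x=(n+1)\cdot\vr y$ by repeatedly subtracting $(n+1,1)$ and zero-testing only $\vr y$. Since Claim~\ref{claim:weak-multiplication} gives $\vr x\le (n+1)\,x_{n+1}$ with equality iff every gadget was maximal, this one comparison forces exactness everywhere, which in turn forces $x_{n+1}$ to be a multiple of $N(n)=\lcmpar{\{2,\ldots,n+1\}}/(n+1)$, an exponential quantity. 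All fractions used satisfy $c>d$, so Claim~\ref{claim:weak-multiplication} applies throughout, and $\vr z$ is never asked to encode any divisibility invariant. This ``copy the initial value, weakly multiply, then compare globally'' pattern is precisely the idea your plan is missing.
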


\moveup\moveup
\twocol{.40}{
\begin{algorithm}[H]
\caption{Counter program $\mathcal{P}_n$.}
\label{alg:exponential}
\begin{algorithmic}[1]
\State \initialise
\State \inc{\vr{x}} \quad \inc{\vr{y}}
\Loop
  \State \inc{\vr{x}} \quad \inc{\vr{y}}
\EndLoop
\For {\, $i$ \, := \, $n$ \, \textbf{down to } $1$}\label{l:forstart1}
\Loop \label{l:xz}
\State \sub{\vr{x}}{1} \quad \add{\vr{z}}{1} \label{l:xz2}
\EndLoop \label{l:xzend}
\Loop \label{l:zx}
\State \add{\vr{x}}{i+1} \quad \sub{\vr{z}}{i} \label{l:forend1}
\EndLoop \label{l:zxend}
\EndFor
\Loop
\State \sub{\vr{x}}{n+1} \quad \sub{\vr{y}}{1} \label{l:loopy}
\EndLoop
\State \haltz{\vr{y}}.\label{l:halt}
\end{algorithmic}
\end{algorithm}
\movedown
}
{\qquad}{.50}{
In this section we prove the theorem.
The VASS $\mathcal{V}_n$ are represented by the counter programs $\mathcal{P}_n$ 
shown in Algorithm~\ref{alg:exponential}. 
The idea of multiplying by consecutive fractions $\frac{2}{1}$, $\frac{3}{2} \ldots \frac{n+1}{n}$ 
comes from~\cite{CzerwinskiLLLM19} (cf.~Algorithms~I,II therein), however, we need to apply the multiplications in the reverse order. 
The size of $\mathcal{P}_n$ is quadratic in $n$, as the \textbf{for} macro unfolds $n$ times, and
the constants appearing in the increment/decrement commands, like $i+1$ in \add{\vr{x}}{i+1}, are written in unary.
Consider any run that reaches (but not yet executes) line~\ref{l:halt}.
For every $i = n, \ldots, 1$ let $x_i$ and $z_i$ be the values of counters 
$\vr{x}$ and $\vr{z}$, respectively, at the exit from the loop in lines~\ref{l:zx}--\ref{l:zxend}.
}
Similarly, let $x_i'$ be the value of counter $\vr{x}$ at the exit from the loop in lines~\ref{l:xz}--\ref{l:xzend}. 
For uniformity we write $x_{n+1}$ and $z_{n+1}$ for the values of $\vr{x}$ and $\vr{z}$, respectively, 
just before entering the \textbf{for} macro, and call these values initial. 
Notice that $x_{n+1}$ is equal to the value of counter $\vr{y}$ at that point and $z_{n+1} = 0$.
By Claim~\ref{claim:weak-multiplication} we derive:

\begin{claim}\label{claim:fraction}
For all $i = 1,\ldots, n$, we have
$
x_i + z_i \le (x_{i+1} + z_{i+1}) \cdot \frac{i+1}{i}.
$
\end{claim}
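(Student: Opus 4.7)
My plan is to prove the claim by direct reduction to Claim~\ref{claim:weak-multiplication}. The key observation is that a single iteration of the \textbf{for} macro in Algorithm~\ref{alg:exponential}, for a fixed value of the meta-variable $i$, consists precisely of the two loops in lines \ref{l:xz}--\ref{l:xzend} and \ref{l:zx}--\ref{l:zxend}, which operate only on counters $\vr{x}$ and $\vr{z}$. Up to renaming $\vr{y} \mapsto \vr{z}$, this fragment is an instance of the weak multiplication program of Algorithm~\ref{alg:weak-multiplication} with parameters $c = i+1$ and $d = i$, which satisfy the side condition $c > d$.

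Next I would make the correspondence of initial and final values explicit. When the \textbf{for} iteration for index $i$ is entered, the values of $\vr{x}$ and $\vr{z}$ are $x_{i+1}$ and $z_{i+1}$ respectively, by definition (and, for the very first iteration $i=n$, by the convention fixing the initial values before the macro). When this iteration is exited, the values of $\vr{x}$ and $\vr{z}$ are $x_i$ and $z_i$ respectively. Thus, applying Claim~\ref{claim:weak-multiplication} with $x_0 := x_{i+1}$, $y_0 := z_{i+1}$, $x_1 := x_i$, $y_1 := z_i$, $c := i+1$, $d := i$, we obtain
\[
x_i + z_i \;\le\; (x_{i+1} + z_{i+1}) \cdot \frac{i+1}{i},
\]
which is exactly the inequality in the claim.

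There is essentially no obstacle here beyond bookkeeping: the claim is a one-line instantiation of Claim~\ref{claim:weak-multiplication}, once we have identified the right pair of counters and verified that the iteration is syntactically the weak-multiplication template. I would briefly note, for the reader, that the claim holds uniformly for all $i = 1, \ldots, n$ because the meta-variable $i$ ranges over exactly these values in the unfolded program, and each unfolded copy is independently an instance of Algorithm~\ref{alg:weak-multiplication}.
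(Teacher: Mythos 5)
Your proof is correct and matches the paper's approach: the paper likewise derives this claim by directly instantiating Claim~\ref{claim:weak-multiplication} to each iteration of the \textbf{for} macro with $c = i+1$, $d = i$, and $\vr{z}$ playing the role of $\vr{y}$. Your more explicit bookkeeping of the initial and final values is a fair spelling-out of the same one-line argument.
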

%
%
We focus on runs that \emph{maximally iterate} both inner loops, by which we mean:
\begin{itemize}
\item the value of $\vr{x}$ is 0 at the exit of the loop in lines \ref{l:xz}--\ref{l:xzend};
\item the value of $\vr{z}$ is 0 at the exit of the loop in lines \ref{l:zx}--\ref{l:zxend};
\end{itemize}

\begin{claim}\label{claim:mult}
We have $x_1 \le x_{n+1} \cdot (n+1)$. The equality holds if, and only if, $z_i = x'_i = 0$ for all $i = 1,\ldots,n$.
\end{claim}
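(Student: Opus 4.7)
The plan is to iterate Claim~\ref{claim:fraction} from $i=n$ down to $i=1$ and telescope. Using $\prod_{i=1}^n \frac{i+1}{i} = n+1$ together with the boundary condition $z_{n+1}=0$ (which holds because $\vr{z}$ has not been touched prior to the \textbf{for} macro), one obtains
\[
x_1 + z_1 \;\le\; (x_{n+1}+z_{n+1}) \cdot (n+1) \;=\; x_{n+1} \cdot (n+1),
\]
and since $z_1 \ge 0$, the inequality $x_1 \le x_{n+1} \cdot (n+1)$ follows immediately.

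For the equality characterization I would invoke Claim~\ref{claim:weak-multiplication} applied to the $i$-th iteration of the \textbf{for} macro, under the correspondence $x_0 \leftrightarrow x_{i+1}$, $y_0 \leftrightarrow z_{i+1}$, $x_1 \leftrightarrow x_i$, $y_1 \leftrightarrow z_i$, $x' \leftrightarrow x'_i$, and $c/d \leftrightarrow (i+1)/i$. The iff part of that claim tells us that the step-wise inequality $x_i + z_i \le (x_{i+1}+z_{i+1}) \cdot (i+1)/i$ is tight precisely when $x'_i = z_i = 0$, i.e.\ precisely when both inner loops of the $i$-th iteration iterate maximally. For the \emph{if} direction, assuming $x'_i = z_i = 0$ for every $i$ makes each step of the telescope tight and forces $z_1 = 0$, yielding $x_1 = x_{n+1}(n+1)$. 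For the \emph{only if} direction, $x_1 = x_{n+1}(n+1)$ forces $z_1 = 0$ and every intermediate inequality in the telescope to be tight, whence the step-wise iff gives $x'_i = z_i = 0$ for all $i = 1, \ldots, n$.

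There is no real obstacle. The only mild point that needs attention is that Claim~\ref{claim:weak-multiplication} phrases its equality condition as $x_1 = (x_0+y_0) \cdot c/d$ rather than $x_1 + y_1 = (x_0+y_0) \cdot c/d$, but these two forms coincide: the right-hand side of the iff already forces $y_1 = 0$, so the sum form and the $x_1$-form agree, which is exactly what is used step-by-step in the telescoping argument above.
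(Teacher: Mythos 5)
Your proof is correct and takes essentially the same route as the paper: telescope Claim~\ref{claim:fraction} using $z_{n+1}=0$ for the inequality, and apply the equality characterisation of Claim~\ref{claim:weak-multiplication} at each step for the iff (you argue the only-if direction directly by tightness, the paper by contrapositive, which is the same thing). Your closing remark about the sum form versus the $x_1$-form of Claim~\ref{claim:weak-multiplication} is apt, since the paper's proof implicitly relies on the same strengthening when it asserts $x_i + z_i < (x_{i+1}+z_{i+1})\cdot\frac{i+1}{i}$.
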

For a finite subset $X\subseteq \N$ of natural numbers, we write $\lcmpar X$ for the least common multiple of all numbers in $X$.
We will use the number $N(n)$ defined as
\[
N(n) \quad := \quad \frac{\lcmpar {\set{2, \dots, n+1}}}{n+1}.
\]
%
The following two claims conclude the proof of Theorem~\ref{th:exponential}. 

\begin{claim}\label{claim:lcm}
The function $N$ grows exponentially with respect to $n$.
The binary representation $\bin{N(n)}$ is computable in time polynomial with respect to $n$.
\end{claim}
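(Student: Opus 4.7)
The plan is to handle the two assertions separately, since they use essentially independent tools.

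For the exponential growth, write $L(n)=\mathrm{lcm}(\{1,2,\dots,n+1\})$; since $\mathrm{lcm}(\{2,\dots,n+1\})=L(n)$ as $1$ is neutral, we have $N(n)=L(n)/(n+1)$, so it suffices to show $L(n)\ge 2^{cn}$ for some constant $c>0$ and all large $n$. I would use the classical elementary argument via the Beta integral $I_m=\int_0^1 x^m(1-x)^m\,dx=\bigl((2m+1)\binom{2m}{m}\bigr)^{-1}$. Since $I_m$ is a rational linear combination of the numbers $1/k$ for $1\le k\le 2m+1$ (expand $(1-x)^m$ by the binomial theorem and integrate termwise), $L(2m)\cdot I_m$ is a positive integer, whence $L(2m)\ge (2m+1)\binom{2m}{m}\ge 4^m$. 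Hence $L(n)\ge 2^{n-1}$ for all sufficiently large $n$, and therefore
\[
N(n)\;=\;\frac{L(n)}{n+1}\;\ge\;\frac{2^{n-1}}{n+1},
\]
which is exponential in $n$.

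For polynomial-time computability of $\bin{N(n)}$, I would compute $L(n)$ iteratively by the identity $\mathrm{lcm}(a,b)=ab/\gcd(a,b)$, starting from $1$ and successively absorbing $2,3,\dots,n+1$. Throughout the computation, the running value never exceeds $L(n)$, and by the standard upper bound $L(n)\le (n+1)^{n+1}=2^{O(n\log n)}$ every intermediate operand has bit-length polynomial in $n$. Each gcd is computed by the Euclidean algorithm in time polynomial in the bit-lengths of its inputs, and the final division by $n+1$ is likewise polynomial. Summing over the $n$ iterations gives a polynomial-time algorithm producing $\bin{L(n)}$, hence $\bin{N(n)}$.

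The only delicate step is the lower bound on $L(n)$; everything else is routine bit-complexity bookkeeping. I expect the integral-based proof to be the cleanest because it avoids invoking deeper facts such as the prime number theorem or Bertrand's postulate and gives an explicit constant already sufficient for the exponential growth statement.
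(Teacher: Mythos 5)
Your proof is correct and takes a genuinely different route from the paper on both halves of the claim.

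For the exponential lower bound, the paper cites the prime number theorem (in a weakened form, $\pi(n)\ge cn^\epsilon-1$ with $\epsilon<1$) and observes that $\lcmpar{\set{2,\dots,n+1}}$ is divisible by every prime up to $n+1$, each contributing a factor at least $2$, so $\lcmpar{\set{2,\dots,n+1}}\ge 2^{\pi(n+1)}$. Your argument via the Beta integral $I_m=\int_0^1 x^m(1-x)^m\,dx$ (Nair's classical proof) is more elementary, avoids analytic number theory entirely, and actually yields the sharper and fully explicit estimate $L(n)\ge 2^{n-1}$, hence $N(n)\ge 2^{n-1}/(n+1)$; the paper's stated prime-counting bound only delivers $N(n)\ge 2^{cn^\epsilon-1}/(n+1)$, which is super-polynomial but with a sub-linear exponent. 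So your version is both cleaner and quantitatively stronger. (One minor point worth making explicit: for odd $n$ you appeal to monotonicity $L(n)\ge L(n-1)$ to pass from the even case; the bound $L(2m)\cdot I_m\in\Z_{>0}$ uses that all denominators $m+j+1$ appearing in the expansion lie in $\{m+1,\dots,2m+1\}$, all of which divide $L(2m)=\mathrm{lcm}\set{1,\dots,2m+1}$.)

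For polynomial-time computability, the paper sketches computing the prime factorizations of the integers $2,\dots,n+1$ and combining them into a prime factorization of $N(n)$. Your alternative — iteratively absorbing $2,3,\dots,n+1$ via $\mathrm{lcm}(a,b)=ab/\gcd(a,b)$ and finally dividing by $n+1$, with the crude bound $L(n)\le (n+1)^{n+1}$ keeping all operands at bit-length $O(n\log n)$ — is equally valid and arguably more self-contained, since it relies only on the Euclidean algorithm rather than on integer factorization (trivial here since the numbers are polynomially small in $n$, but still an extra observation in the paper's version). Both routes give the same polynomial running time.
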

\begin{proof}
For the exponential upper bound we recall that $N(n) \le n!$
For the exponential lower bound we use the prime number theorem (proved independently by Jacques Hadamard and Charles Jean de la Vall\'ee Poussin in 1896): the numer of primes $\pi(n)$ between $2$ and $n$ is at least $\pi(n) \ge c \cdot n^{\epsilon} - 1$ for some constants $c > 0$ and $0 < \epsilon < 1$. 
Since $\lcmpar{\set{2 \dots n+1}}$ must be divisible by all prime
numbers between $2$ and $n+1$ and each prime number is at least $2$ we
get $\lcmpar{\set{2, \ldots, n+1}} \ge 2^{\pi(n+1)}$.

$N(n)$ is computated by exhaustive enumeration of all non-divisors of $n+1$, computing their prime decompositions,
and combining them into prime decomposition of $N(n)$.
\end{proof}
\begin{claim}\label{claim:final}
For every initial value $\vr{x}_{n+1}$ of counter $\vr{x}$ there is at most one halting run. 
Such a run exists if, and only if, $\vr{x}_{n+1}$ is a positive multiple of $N(n)$.
\end{claim}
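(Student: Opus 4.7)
The plan is to use Claim~\ref{claim:mult} as an upper bound on $x_1$ and pair it with a matching lower bound forced by the final loop to pin down $x_1 = x_{n+1}(n+1)$, which in turn forces every weak multiplication to be exact. A short divisibility analysis then converts exactness into the single condition $N(n) \mid x_{n+1}$.

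First I would note that $\vr{y}$ retains its initial value $x_{n+1}$ throughout the \textbf{for} macro, since only $\vr{x}$ and $\vr{z}$ are touched in lines~\ref{l:forstart1}--\ref{l:zxend}. In the loop at line~\ref{l:loopy}, each iteration decreases $\vr{y}$ by $1$ and $\vr{x}$ by $n+1$, and the terminating command \haltz{\vr{y}} demands $\vr{y}=0$. Hence this loop must be iterated \emph{exactly} $x_{n+1}$ times, which forces $x_1 \ge x_{n+1}(n+1)$. Combining with the upper bound from Claim~\ref{claim:mult} yields $x_1 = x_{n+1}(n+1)$, and the equality clause of the same claim then gives $z_i = x'_i = 0$ for $i = 1,\ldots,n$ and $x_i = x_{n+1}(n+1)/i$ for $i = 1,\ldots,n+1$.

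Next I would extract the divisibility condition from the integrality of the $x_i$. The requirement that $i \mid x_{n+1}(n+1)$ for every $i \in \{1,\ldots,n+1\}$ is equivalent to $\lcmpar{\{1,\ldots,n+1\}} \mid x_{n+1}(n+1)$. Because $n+1$ is itself one of the terms, $(n+1) \mid \lcmpar{\{1,\ldots,n+1\}}$, so cancelling the factor $n+1$ on both sides gives exactly $N(n) \mid x_{n+1}$. Positivity of $x_{n+1}$ follows from the unconditional \inc{\vr{x}} that precedes the initial loop. Conversely, given a positive multiple $x_{n+1}$ of $N(n)$, every intermediate value $x_{n+1}(n+1)/i$ is a positive integer, and the obvious run --- pumping $\vr{x}$ and $\vr{y}$ to the right value, iterating each inner loop maximally, and iterating the final loop exactly $x_{n+1}$ times --- halts. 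Uniqueness at fixed $x_{n+1}$ follows because the same chain of equalities shows that every halting run is forced to make precisely these choices at every nondeterministic branch.

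The only genuinely delicate step will be the equivalence between the family of stepwise integrality constraints and the single divisibility $N(n) \mid x_{n+1}$; everything else amounts to routine bookkeeping around Claims~\ref{claim:weak-multiplication} and~\ref{claim:mult}.
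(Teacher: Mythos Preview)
Your proposal is correct and follows essentially the same route as the paper: derive $x_1 \ge x_{n+1}(n+1)$ from the final loop and the zero-test on $\vr{y}$, combine with Claim~\ref{claim:mult} to force equality and hence $z_i = x'_i = 0$, then telescope via Claim~\ref{claim:fraction} to $x_i = x_{n+1}(n+1)/i$ and read off the divisibility condition $N(n) \mid x_{n+1}$. The only minor imprecision is that the intermediate formula $x_i = x_{n+1}(n+1)/i$ is not literally part of the equality clause of Claim~\ref{claim:mult} but follows from it together with Claim~\ref{claim:weak-multiplication} (equivalently Claim~\ref{claim:fraction}); since you invoke both claims anyway, this is harmless.
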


\begin{proof}
Recall that the last loop in $\mathcal{P}_n$ in line~\ref{l:loopy} decreases simultaneously $\vr{y}$ by $1$ and $\vr{x}$ by $n+1$. 
Therefore, the run halts only if $\vr{x} \ge \vr{y} \cdot (n+1)$. By Claim~\ref{claim:mult} we have $\vr{x} \le \vr{y} \cdot (n+1)$.
Thus every halting run satisfies the equality $\vr{x} = \vr{y} \cdot (n+1)$.
By Claim~\ref{claim:mult} we know that is possible only if $\vr{z}_i = \vr{x}_i' = 0$ for all $i = 1,\ldots,n$, which uniquely determines the run for a given $\vr{x}_{n+1}$.
It remains to prove that a halting run exists if, and only if, $\vr{x}_{n+1}$ is a positive multiple of $N(n)$. 
Notice that by Claim~\ref{claim:fraction} and Claim~\ref{claim:mult} in a halting run
$$
\vr{x}_i = \vr{x}_{i+1} \cdot \frac{i+1}{i} = \ldots = \vr{x}_{n+1} \cdot \prod_{j = i}^{n} \frac{j+1}{j} = \vr{x}_{n+1} \cdot \frac{n+1}{i}.
$$
Therefore $\vr{x}_{n+1} \cdot (n+1)$ must be always divisible by all
numbers in ${\set{1 \dots n}}$. Since $n+1$ divides $\vr{x}_{n+1}
\cdot (n+1)$ as well, we deduce that $\lcmpar{\set{2, \ldots, n+1}}$
divides $\vr{x}_{n+1}\cdot (n+1)$ which is equivalent to $N(n)$
divides $\vr{x}_{n+1}$. 
Conversely, if $\vr{x}_{n+1}$ is a multiple of $N(n)$ then there is a run where all loops are iterated
maximally, satisfying $\vr{x}_i = \vr{x}_{i+1} \cdot \frac{i+1}{i}$ and thus halting.
\end{proof}

\section{NP-hardness}
\label{sec:np}

This section is devoted to proving \NP-lower bound for flat VASS in fixed dimension.

\begin{theorem}\label{th:np}
The reachability problem for unary flat 7-VASS is \NP-hard.
\end{theorem}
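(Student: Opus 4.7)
The plan is to reduce from \textsc{Subset Sum}, which is \np-complete even when the items $a_1,\ldots,a_k$ and the target $t$ are written in binary and bounded by $2^n$. Given such an instance, I would build a flat unary $7$-VASS that admits a halting run if and only if some $S\subseteq\{1,\ldots,k\}$ satisfies $\sum_{i\in S}a_i=t$. Three of the seven counters will host a Section~\ref{sec:exponential}-style cascade of weak fraction multiplications, providing an ``exact arithmetic on exponentially large integers'' infrastructure; the remaining four counters will host the \textsc{Subset Sum} gadget, whose correctness is enforced \emph{globally} by that cascade.

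First, I would set up the arithmetic infrastructure. Using three counters $\vr{x},\vr{y},\vr{z}$ and essentially the program $\mathcal{P}_n$ of Section~\ref{sec:exponential}, I would force any halting run to execute every weak multiplication in the cascade exactly, pinning $\vr{x}$ to a value that is a specific multiple of $N(n)$ when it enters the final check. By Claims~\ref{claim:mult} and~\ref{claim:final}, this is achievable by a polynomial-size unary program and gives us a reference resolution that is exponentially large in the input.

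Second, I would lay out the \textsc{Subset Sum} gadget on the remaining four counters, call them $\vr{a},\vr{b},\vr{c},\vr{d}$. Counter $\vr{a}$ is the accumulator and $\vr{b}$ will hold the target $t$ reconstructed to the same resolution, while $\vr{c},\vr{d}$ are work counters for weak doublings in the style of Algorithm~\ref{alg:weakly}. Unrolling a \textbf{for} macro over $i\in\{1,\ldots,k\}$ and over the bit positions of each $a_i$ (which costs no extra counters, only program size), I insert for each $(i,j)$ a non-nested block that weakly doubles $\vr{a}$ and then, along a nondeterministic branch parameterised by the item-choice $c_i\in\{0,1\}$, either adds the bit $b_{i,j}$ of $a_i$ or skips it. The target $t$ is produced in $\vr{b}$ by an analogous but fully deterministic Horner-style expansion. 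Because all loops are sequential, the resulting VASS is flat.

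The third phase mirrors Phase~1 in reverse, coupling $\vr{a}$ and $\vr{b}$ back into the Section~\ref{sec:exponential} cascade so that the \textbf{halt} test succeeds iff $\vr{a}=\vr{b}$ \emph{and} every weak doubling in Phase~2 was exact; by the same argument as in Claim~\ref{claim:mult}, this happens iff the nondeterministic choices $c_i\in\{0,1\}$ are well-defined (i.e.\ consistent across bit positions of the same item) and $\sum_{i\in S}a_i=t$. The \emph{main obstacle} is precisely this coupling: the $k\cdot n$ weak doublings of Phase~2 must all be forced to be exact by a single final divisibility check, because affording one check per doubling would blow up the counter count. Overcoming this amounts to tying each bit-step of Phase~2 into the single global fraction cascade of Phase~1, choosing the schedule of fractions so that any slackness anywhere in the program translates into a failure of the final equality test, while routing everything through only four auxiliary counters without nesting any loops. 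Designing this interleaving is where the bulk of the ``careful machinery for exact computations on exponentially large integers'' promised in the introduction gets spent.
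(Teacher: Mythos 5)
Your overall plan — reduce from \subsum{} and make a Section~\ref{sec:exponential}-style cascade police the exactness of many weak doublings using only a few counters — matches the paper's strategy, and you correctly name the hard part. But your sketch does not supply a mechanism for it, and the one you gesture at would not work: the cascade in $\mathcal{P}_n$ only certifies divisibility of the \emph{single} counter it is multiplying through, and it cannot retroactively detect slack in independent weak doublings performed earlier on \emph{other} counters, no matter how you ``couple $\vr{a}$ and $\vr{b}$ back into the cascade.'' The paper's actual device is a global \emph{budget}, absent from your plan: the fragment $\mathcal{I}'$ first produces \emph{exactly} $N(n)$ in a counter $\vr{e}$ and $N(n)\cdot(k+1)$ in a counter $\vr{f}$ (using the cascade once, at the start), and then every iteration of every weak-doubling loop inside a component is wired to also decrement $\vr{e}$ (into $\vr{e}'$) and $\vr{f}$. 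Because $\vr{e}+\vr{e}'=N(n)$ is invariant, each of the $k+1$ components can decrease $\vr{f}$ by at most $N(n)$; since $\vr{f}$ is zero-tested at the halt, each component must do so \emph{exactly}, which forces all the inner loops maximal. That coupling is the entire content of the ``careful machinery,'' and you would have to invent it.

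Your item-selection structure also creates a problem the paper does not have. You branch independently at each bit position $(i,j)$ and then note that the choices $c_i$ must be ``consistent across bit positions of the same item,'' appealing to Claim~\ref{claim:mult}; but that claim says nothing about branch consistency, and a flat VASS cannot remember a per-bit nondeterministic choice across bit positions without spending a counter on it. The paper sidesteps this entirely by making the choice once per item in the \emph{control graph}: for each $i$ the run is routed through exactly one of two whole blocks, $\mathcal{R}^-_{s_i,\true}$ or $\mathcal{R}^-_{s_i,\false}$, via a diamond of \textbf{goto}s, and each block handles all bits of $s_i$ deterministically (either always touching the accumulator $\vr{u}$, or never). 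That choice-per-item fork, together with the $\vr{e},\vr{e}',\vr{f}$ budget counters and the reuse of the now-zero $\vr{x},\vr{x}',\vr{z}$ after $\mathcal{I}'$, is what lands the construction on exactly $7$ counters; your $3+4$ split leaves no room for the budget machinery at all.
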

As mentioned in the introduction NP-membership is already known (even in binary VASS of unrestricted dimension). 
Thus as a corollary we get the following result.

\begin{corollary}
The reachability problem for flat $d$-VASS is \NP-complete for fixed $d \ge 7$.
\end{corollary}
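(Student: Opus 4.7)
The plan is to reduce in polynomial time from SUBSET SUM with binary-encoded numbers, which is NP-complete. Given an instance with natural numbers $a_1, \ldots, a_n$ and target $t$, each written in $m$ bits, I will construct a unary flat $7$-VASS whose size is polynomial in $n+m$ and in which a halting run exists if and only if some subset of the $a_i$'s sums to $t$. The crucial difficulty bridged here is that the $a_i$'s are exponentially large in unary, yet they must enter the transition effects of a unary VASS; the Section~\ref{sec:exponential} machinery is precisely what lets us manufacture and manipulate exponentially large values from polynomially many unary updates.

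The first step is to import the gadget from Theorem~\ref{th:exponential}. Its three counters can be arranged so that, in any halting run, some counter carries an exponentially large value that is a positive multiple of a prescribed large number $M$ (a small adaptation of $N(n)$). This provides a ``budget'' against which exact arithmetic on $m$-bit numbers can be policed.

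Next, for each index $i$ I would append, via the \textbf{for} macro (so that the unfolded control graph stays flat), a program fragment that nondeterministically either ignores $a_i$ or, by invoking the weak-computation macro $\weakexp{a_i}{\vr{x}}{\vr{x}'}$ of Proposition~\ref{prop:weak}, subtracts an amount $a_i$ from a running counter $\vr{s}$ initialised to $t$. Since $\weakexp{\cdot}{}{}$ only weakly produces $a_i$, the nondeterministic subset choice must be rigorously policed. This is where the extra counters earn their keep: alongside each weak computation, a parallel counter collects a sister quantity derived from the same loops, and at the end a single divisibility test in the spirit of Claim~\ref{claim:final} is applied. By the same reasoning as in Claim~\ref{claim:mult} and Claim~\ref{claim:final}, the divisibility condition is satisfied only if every weak computation was maximal, i.e.\ every $a_i$ was computed exactly; together with $\vr{s}=0$ in the final zero-test, this is equivalent to the chosen subset summing to exactly $t$.

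The main obstacle is staying within $7$ counters while preserving flatness. The Section~\ref{sec:exponential} gadget already consumes $3$ counters; the weak binary arithmetic requires the pair $\vr{x},\vr{y}$ of Algorithm~\ref{alg:weakly}; the subset-sum accumulator takes one counter; and one or two further helpers are needed to maintain the sister quantity that drives the final divisibility check. The delicate engineering is to amortise these helpers across all $n$ phases---rather than using fresh counters per $a_i$---so that a single divisibility check after the unrolled \textbf{for} forces exactness of \emph{all} the weak multiplications at once, and so that every cycle in the control graph remains non-nested. This amortisation is the ``careful machinery that facilitates exact computations on exponentially large integers'' promised in the introduction.
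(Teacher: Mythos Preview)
Your high-level plan matches the paper's: reduce from \subsum, use the Section~\ref{sec:exponential} gadget to certify an exponentially large value, weakly compute each $s_i$ bit by bit, and reuse counters to land at dimension~$7$. But the mechanism you sketch for forcing the weak computations to be exact---``a single divisibility test in the spirit of Claim~\ref{claim:final}''---is not what the paper does, and it is unclear your version can be made to work.

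The paper's device is a \emph{budget} argument, not a post-hoc divisibility check. The initialisation program~$\mathcal I$ first computes, exactly, the pair $\vr{e}=N(n)$ and $\vr{f}=N(n)\cdot(k{+}1)$; its own exactness is certified once by the Section~\ref{sec:exponential} trick. Then each of the $k{+}1$ components $\mathcal R^{*}_{s_i,p}$ decrements $\vr{f}$ only in lock-step with $\vr{e}$ (along the bits of $N(n)$, with $\vr{e}$ afterwards restored from $\vr{e}'$), so a single component can remove at most $N(n)$ from~$\vr{f}$. Since $\vr{f}$ is zero-tested at the end and exactly $k{+}1$ components are traversed, each must remove \emph{exactly} $N(n)$; that is what forces every weak doubling of the auxiliary counter~$\vr{v}$ to be maximal, hence every $s_i$ to be computed exactly and added to or subtracted from~$\vr{u}$ exactly. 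Your ``sister quantity plus one divisibility test'' does not obviously achieve this per-phase tightness: the $s_i$ are unrelated numbers and their weak computations are independent sequences of doublings, not a single telescoping product as in Claim~\ref{claim:final}, so a lone final check does not pin down each phase individually.

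A minor omission: the corollary asserts \NP-\emph{completeness}, so you also need the membership direction, which the paper takes from the known reduction of flat VASS reachability to existential Presburger arithmetic; your proposal addresses only hardness.
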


To prove Theorem~\ref{th:np}
we reduce from the \subsum problem:  given a set of positive integers $S = \set{s_1,\ldots, s_k}\subseteq \N-\set{0}$ 
and an integer $s_0 > 0$, determine 
if some subset $R \subseteq S$ satisfies $\sum_{s \in R}s = s_0$. 
Note that all the numbers $s_0,s_1, \ldots, s_k$ are encoded in binary.

Fix an instance $s_0,s_1, \ldots, s_k$ of the \subsum problem and let
$n$ be the smallest natural number such that $N(n) \ge s_0,s_1,\ldots, s_k$. 
By Claim~\ref{claim:lcm} the number $n$ as well as the binary representation 
$\bin {N(n)} = b_m \dots b_0$ is computable in time polynomial 
with respect to the sizes of binary representations of $s_0,s_1, \ldots, s_k$.
Recall that $b_m = 1$. 
We are going to define a unary counter program $\mathcal P$ of polynomial size using $7$ counters, 
as a function of  $s_0,s_1, \ldots, s_k$ and $n$,
which halts if, and only if, the instance $s_0,s_1, \ldots, s_k$ is positive.

The main obstacle is that the numbers in the \subsum problem are represented in binary, while
the numbers in a counter program 
are to be represented in unary.
Thus we have to exactly compute with exponential numbers,
using a fixed number of $7$ counters.

\para{Construction of $\mathcal P$}
We face the challenge by combining the weak computation given by  Proposition~\ref{prop:weak} (that allows us to compute 
\emph{at most} a required value $b$) with the insight of the proof of Theorem~\ref{th:exponential} 
(that enforces that the computed value is simultaneously \emph{at least} $b$). 

\twocol{.55}{
\moveup\moveup\moveup\moveup\moveup\moveup
\begin{algorithm}[H]
\caption{Counter program $\mathcal{I}$. 
}
\label{alg:initialise}
\begin{algorithmic}[1]
\State \initialise \label{l:startI}
\State \inc{\vr{x}} \quad \inc{\vr{y}} \quad \inc{\vr{e}} \quad \add{\vr{f}}{k+1}
\For {\, $i$ \, := \, $m-1$ \, \textbf{to } $0$ \,}
\Loop
\State \sub{\vr{x}}{1} \quad \add{\vr{x}'}{1}
\State \sub{\vr{y}}{1} \quad \dec{\vr{e}} \quad \sub{\vr{f}}{k+1}
\EndLoop
\Loop
\State \add{\vr{x}}{2} \quad \sub{\vr{x}'}{1}
\State \add{\vr{y}}{2} \quad \add{\vr{e}}{2} \quad \add{\vr{f}}{2(k+1)}
\EndLoop 
\If{$b_i = 1$}{\\ \qquad\quad \inc{\vr{x}} \ \inc{\vr{y}} \ \inc{\vr{e}} \ \add{\vr{f}}{k+1}}
\EndIf
\EndFor \label{l:l8}
\For {\, $i$ \, := \, $n$ \, \textbf{down to } $1$ \,} \label{l:forI}
\Loop
\State \sub{\vr{x}}{1} \quad \add{\vr{z}}{1}
\EndLoop
\Loop
\State \add{\vr{x}}{i+1} \quad \sub{\vr{z}}{i}
\EndLoop
\EndFor
\Loop
\State \sub{\vr{x}}{n+1} \quad \sub{\vr{y}}{1}
\EndLoop
\State \haltz{\vr{y}} \qquad\qquad\qquad  // \emph{ removed in $\mathcal{I}'$} 
\label{l:haltI}
\end{algorithmic}
\end{algorithm}
}
{\ }{.38}{
Program $\mathcal{I}$ in Algorithm~\ref{alg:initialise} implements this idea. 
The first half of the program, namely lines \ref{l:startI}--\ref{l:l8}, weakly computes
in counter $\vr{e}$ the value $N(n)$, and in counter $\vr{f}$ the value $N(n) \cdot (k+1)$, very much like
the counter program $\weakexp{b}{\vr{x}}{\vr{x}'}$.
Note a slight difference compared to Algorithm~\ref{alg:weakly}:
the oldest bit $b_m = 1$ is treated in a different way than other bits $b_i$ for $0\leq i< m$, by initializing counters $\vr{e}$ and $\vr{f}$ to $1$ and $k+1$, respectively,
which excludes a trivial halting run that would never iterate any loop and end with the value of $\vr{y}$ equal $0$. 
Then the second part of $\mathcal{I}$ checks, very much like the counter program $\mathcal{P}_n$,
if the values are computed exactly.
(Notice that lines \eqref{l:forI}--\eqref{l:haltI} are exactly the same as lines~\ref{l:forstart1}--\ref{l:halt} of Algorithm~\ref{alg:exponential}.)
Using Claim~\ref{claim:final} we get:
}
\begin{claim}\label{claim:intro}
Counter program $\mathcal{I}$ has exactly one halting run that computes $N(n)$ and $N(n) \cdot (k+1)$ in counters $\vr{e}$ and $\vr{f}$,
respectively, and $0$ in the remaining counters $\vr{x}, \vr{x}', \vr{y}$ and $\vr{z}$.
\end{claim}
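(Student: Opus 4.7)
The plan is to prove Claim~\ref{claim:intro} by decomposing $\mathcal{I}$ into its two halves and analysing them in sequence, then gluing the analyses together through the value of $\vr{x}$ at line~\ref{l:l8}.

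First, I would study the first half of $\mathcal{I}$ (lines~\ref{l:startI}--\ref{l:l8}), which is a five-counter refinement of the weak-computation program $\weakexp{b}{\vr{x}}{\vr{x}'}$ from Algorithm~\ref{alg:weakly} with $b=N(n)$. The key observation is that in both inner loops $\vr{y}$ and $\vr{e}$ are updated by exactly the same signed amounts as $\vr{x}$, and $\vr{f}$ is updated by $(k+1)$ times those amounts; the bit-insertion block also respects this pattern. Since the initial values at line~\ref{l:startI} satisfy $\vr{x}=\vr{y}=\vr{e}=1$ and $\vr{f}=(k+1)\vr{x}$, a straightforward induction on the \textbf{for} iterations shows that the invariants
\[
\vr{y}=\vr{x},\qquad \vr{e}=\vr{x},\qquad \vr{f}=(k+1)\vr{x}
\]
are preserved, and hence hold at line~\ref{l:l8}. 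Focusing only on $\vr{x}$ and $\vr{x}'$, the first half is literally the program $\weakexp{N(n)}{\vr{x}}{\vr{x}'}$, so Proposition~\ref{prop:weak} (and the underlying Claim~\ref{claim:weak-multiplication}) gives $\vr{x}\le N(n)$ at line~\ref{l:l8}, with equality if and only if every inner loop is maximally iterated. In that maximal case both inner loops contribute equal numbers of iterations in every \textbf{for} round, so $\vr{x}'=0$ at line~\ref{l:l8} as well.

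Next, I would use the fact that lines~\ref{l:forI}--\ref{l:haltI} of $\mathcal{I}$ are a verbatim copy of lines~\ref{l:forstart1}--\ref{l:halt} of the program $\mathcal{P}_n$ from Algorithm~\ref{alg:exponential}. By the invariant above, when these lines start executing we have $\vr{y}=\vr{x}$, $\vr{z}=0$, so we are in the situation analysed in the proof of Theorem~\ref{th:exponential} with initial value $\vr{x}_{n+1}=\vr{x}$. Applying Claim~\ref{claim:final}, the second half admits a halting continuation if and only if $\vr{x}$ at line~\ref{l:l8} is a positive multiple of $N(n)$, and when it does, the continuation is uniquely determined.

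Combining the two halves is then the crux: for $\mathcal{I}$ to halt, $\vr{x}$ at line~\ref{l:l8} must simultaneously satisfy $\vr{x}\le N(n)$ (from step~1) and be a positive multiple of $N(n)$ (from step~2), which forces $\vr{x}=N(n)$. This in turn forces every inner loop in the first half to iterate maximally, yielding a unique run up to line~\ref{l:l8} (with $\vr{e}=\vr{y}=N(n)$, $\vr{f}=(k+1)N(n)$, $\vr{x}'=0$), and then a unique continuation to the \textbf{halt}. Finally, to obtain the stated final values, I would track each counter through the second half: the analysis in the proof of Theorem~\ref{th:exponential} shows that after the \textbf{for} macro $\vr{x}=N(n)\cdot(n+1)$ and $\vr{z}=0$, so the last loop (line~\ref{l:loopy} of $\mathcal{I}$, corresponding to line~\ref{l:loopy} of $\mathcal{P}_n$) executes exactly $N(n)$ times, bringing both $\vr{x}$ and $\vr{y}$ to $0$; meanwhile $\vr{x}'$ and $\vr{e},\vr{f}$ are not touched by the second half, so they retain the values $0$ and $N(n),(k+1)N(n)$ respectively, as required.

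The only non-trivial step is the invariant-plus-weak-multiplication argument for the first half; once the joint invariant on $\vr{x},\vr{y},\vr{e},\vr{f}$ is established, everything else follows mechanically from results already proved in the paper. I do not expect any genuine obstacle beyond carefully checking that the extra counters $\vr{y},\vr{e},\vr{f}$ do not alter the loop-iteration bounds imposed on $\vr{x},\vr{x}'$ by Claim~\ref{claim:weak-multiplication}.
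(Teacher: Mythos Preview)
Your proposal is correct and matches the paper's intended argument: the paper does not spell out a proof beyond the single line ``Using Claim~\ref{claim:final} we get:'', relying on the prose description of $\mathcal{I}$ (first half weakly computes $N(n)$ simultaneously in $\vr{x},\vr{y},\vr{e}$ and $(k{+}1)N(n)$ in $\vr{f}$; second half is lines~\ref{l:forstart1}--\ref{l:halt} of $\mathcal{P}_n$) together with Claim~\ref{claim:final}. Your decomposition into the two halves, the invariant $\vr{y}=\vr{e}=\vr{x}$ and $\vr{f}=(k{+}1)\vr{x}$, and the squeeze argument $\vr{x}\le N(n)$ versus $N(n)\mid\vr{x}$ at line~\ref{l:l8} are exactly the details the paper leaves implicit; the only minor wrinkle is that the first half is not \emph{literally} $\weakexp{N(n)}{\vr{x}}{\vr{x}'}$ (the oldest bit $b_m$ is handled by direct initialisation rather than a first \textbf{for} iteration), but as the paper itself notes this is an inessential difference.
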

The program $\mathcal P$ (shown in Algorithm~\ref{alg:P}) 
consists of the program $\mathcal{I}'$ obtained from $\mathcal{I}$ by
removing the last \textbf{halt} command.
The remaining part of $\mathcal P$
exploits the values of counters $e$ and $f$ computed by $\mathcal{I}'$ to turn weak computations of exponential numbers
into exact ones.
It never modifies the counter $\vr{y}$ again, hence $\vr{y}$ is listed in the final \textbf{halt} command of $\mathcal P$,
and uses a distinguished counter $\vr{u}$, initially set to $0$, 
a program fragment $\mathcal{R}^+_{{s_0}, \true}$, and a number of program
fragments $\mathcal{R}^-_{s, p}$ for $s\in \set{s_1, s_2 \dots s_k}$ and $p\in \set{\true,\false}$.
We call these program fragments \emph{components}.
In every halting run of $\mathcal P$, the component $\mathcal{R}^-_{s, \true}$ decrements $\vr{u}$ by $s$ while the other
component $\mathcal{R}^-_{s, \false}$ has no effect on counter $\vr{u}$.
Likewise, the component $\mathcal{R}^+_{s_0, \true}$ increments $\vr{u}$ by $s_0$.
Finally, $\vr{u}$ is zero-tested by the final \textbf{halt} command.

\twocol{.35}{
\begin{algorithm}[H]
\caption{Program $\mathcal{P}$.}
\label{alg:P}
\begin{algorithmic}
\State $\mathcal{I}'$
\State $\mathcal{R}^+_{s_0, \true}$
\State \goto{$f_1$}{$t_1$}
\State {\footnotesize $f_1$}:\ \ $\mathcal{R}^-_{s_1, \false}$ \quad {\goto{$f_2$}{$t_2$}}
\State {\footnotesize $t_1$}:\ \ $\mathcal{R}^-_{s_1, \true}$ \quad {\,\goto{$f_2$}{$t_2$}}
\State {\footnotesize $f_2$}:\ \ $\mathcal{R}^-_{s_2, \false}$ \quad {\goto{$f_3$}{$t_3$}}
\State {\footnotesize $t_2$}:\ \ $\mathcal{R}^-_{s_2, \true}$ \quad {\,\goto{$f_3$}{$t_3$}}
\State \qquad\  \ldots
\State {\footnotesize $f_k$}:\ \ $\mathcal{R}^-_{s_k, \false}$ \ \ \ {\gotod{$h$}}
\State {\footnotesize $t_k$}:\ \ $\mathcal{R}^-_{s_k, \true}$ 
\State {\footnotesize $h$}:\ \ \ \haltz{\vr{y},\vr{u},\vr{f}} \label{l:haltgoto}
\end{algorithmic}
\end{algorithm}
\vspace{0.5mm}
}
{\qquad}{.55}{
For $1\leq i \leq k$, we use {\footnotesize $f_i$}, respectively {\footnotesize $t_i$},
 to denote the the first line of the program fragment $\mathcal{R}^*_{s_i, \false}$, respectively
$\mathcal{R}^*_{s_i, \true}$. 
Every component $\mathcal{R}^*_{s_i, p}$, for $0 \leq i < k$ and $* \in \set{+,-}$, is followed by 
\goto{$f_{i+1}$}{$t_{i+1}$}.
Thus for every $i = \set{1\dots k}$ either $\mathcal{R}^-_{s_i,\false}$ or $\mathcal{R}^-_{s_i, \true}$ is executed,
as shown by the following control flow diagram of $\mathcal P$:

\movedown
\scalebox{0.85}{
\begin{tikzpicture}
\node (init) {$\mathcal{I}'$};

\node[right = 0.5cm of init] (R) {$\mathcal{R}^+_{s_0, \true}$};

\node[above right = 0.5cm and 0.5cm of R] (R10) {$\mathcal{R}^-_{s_1,\false}$};
\node[below right = 0.5cm and 0.5cm of R] (R11) {$\mathcal{R}^-_{s_1,\true}$};

\node[right = 0.7cm of R10] (R20) {$\mathcal{R}^-_{s_2,\false}$};
\node[right = 0.7cm of R11] (R21) {$\mathcal{R}^-_{s_2,\true}$};

\node[right = 5.5cm of init] (dots) {\ldots};

\node[above right = 0.7cm and 0.3cm of dots] (Rk0) {$\mathcal{R}^-_{s_k,\false}$};
\node[below right = 0.7cm and 0.3cm of dots] (Rk1) {$\mathcal{R}^-_{s_k,\true}$};

\path
(init) edge[->] (R)
(R) edge[->] (R10)
(R) edge[->] (R11)
(R10) edge[->] (R20)
(R11) edge[->] (R21)
(R10) edge[->] (R21)
(R11) edge[->] (R20)
(R20) edge[->] (dots)
(R21) edge[->] (dots)
(R20) edge[->] ++(1,0)
(R21) edge[->] ++(1,0)
;
\end{tikzpicture}

}

}

\noindent
Observe that every halting run of $\mathcal P$ determines a subset $R \subseteq \{1, \ldots, k\}$ such that
for $i \in R$ the component $\mathcal{R}^-_{s_i, \true}$ is executed, while for 
$i \not\in R$ the component $\mathcal{R}^-_{s_i, \false}$ is executed.

\para{The Components} 
The component $\mathcal{R}^*_{a, p}$ is shown in Algorithm~\ref{alg:component}.
By $\binext{m}{a} = a_m  \dots a_0$ 
we mean the $(m+1)$-bit binary representation of the number $a < 2^{m+1}$, 
padded with leading 0 bits if needed.
The aim of every $\mathcal{R}^*_{a, \true}$ is to increment (when $*=+$) or decrement (when $*=-$) $a$ from 
the counter $\vr{u}$, using the counters $\vr{e}$ and $\vr{f}$ to enforce exactness.
After the auxiliary counter $\vr{v}$ is initialised to $1$, in every iteration of the \textbf{for} loop 
(in lines 2-8) counter $\vr{v}$
is weakly multiplied by $2$, so after $i$ iterations its value is at most $2^i$.

\twocol{.45}{
\begin{algorithm}[H]
\caption{Component $\mathcal{R}^*_{a,p}$.  
}
\label{alg:component}
\begin{algorithmic}[1]
\State \inc{\vr{v}}
\For {\, $j$ \, := \, $0$ \, \textbf{to } $m-1$ \,}\label{l:for1}
\Loop \label{l:weak1}
\State \dec{\vr{v}} \quad \inc{\vr{v}'}
\If{$b_j = 1$}{}
\State {\, \dec{\vr{e}} \ \inc{\vr{e}'} \ \dec{\vr{f}}}\label{l:dec1}
\EndIf
\If{$p \wedge (a_j = 1)$}{\,  \incordec{\vr{u}}} \label{l:uuu}
\EndIf
\EndLoop
\Loop \label{l:bits}
\State \add{\vr{v}}{2} \quad \dec{\vr{v}'} \label{l:weak2}
\EndLoop
\EndFor
\Loop \label{l:mbitS}
\State \dec{\vr{v}} 
\State \dec{\vr{e}} \quad \inc{\vr{e}'} \quad \dec{\vr{f}}\label{l:dec2}
\If{$p \wedge (a_m = 1)$}{\,  \incordec{\vr{u}}} \label{l:mbitE}
\EndIf
\EndLoop
\Loop \label{l:loop1}
\State \inc{\vr{e}} \quad \dec{\vr{e}'} \label{l:laste}
\EndLoop
\end{algorithmic}
\end{algorithm}
\movedown
}
{\qquad}{.45}{
In lines~\ref{l:dec1} and~\ref{l:dec2} counter $\vr{f}$ is decremented, in both cases together with counter $\vr{e}$,
hence the total decrement of $\vr{f}$ is at most the initial value of counter $\vr{e}$.
Now recall that in a halting run of $\mathcal P$ the values of $\vr{e}$ and $\vr{f}$ output by $\mathcal{I}'$ are $N(n)$ and
$N(n) \cdot (k+1)$, respectively. As every halting run of $\mathcal P$ 
passes through exactly $k+1$ components and $\vr{f}$ is zero-tested by the final \textbf{halt} command of $\mathcal P$, 
every of the components forcedly decrements $\vr{f}$ by exactly $N(n)$.
Also forcedly, after $i$ iterations of the for loop in lines~\ref{l:for1}-\ref{l:weak2} the value of counter $\vr{v}$ is exactly $2^i$. 
This in consequence implies that the counter $\vr{u}$ is incremented (respectively, decremented) in lines~\ref{l:uuu} and~\ref{l:mbitE} by
exactly $a$ times, hence by $a$ in total.
Lines~\ref{l:loop1}-\ref{l:laste} are to revert the roles of counters $e$ and $e'$.
}

\noindent
Note that the oldest bit $a_m$, irrespectively of its value $0$ or $1$, is treated differently (in lines \ref{l:mbitS}-\ref{l:mbitE}) 
from the other bits $a_{m-1} \dots a_0$ of $\binext{m}{a}$ (treated in the body of the \textbf{for} loop in  lines \ref{l:bits}--\ref{l:weak2}).
This is because the auxiliary counter $\vr{v}$ needs to be multiplied by $2$ exactly $m$ times, 
which happens in the course of $m$ iterations of the \textbf{for} loop, while the number of bits in $\bin{a}$ is $m+1$, thus
larger by $1$. 
Consequently, in lines \ref{l:mbitS}-\ref{l:mbitE} the value of $\vr{v}$ is not flashed to $\vr{v}'$ nor restored back from $\vr{v}'$,
and hence $\vr{v}$ is forcedly $0$ at the end of $\mathcal{R}^*_{a, \true}$
and can be reused by the following commands.
Note that the \textbf{if} macro is used in line~\ref{l:dec2} as, due to the choice of $m$, the oldest bit $b_m$ of $\bin{N(n)}$ is $1$.

The above analysis applies equally well to every component $\mathcal{R}^*_{a, \false}$, 
as its computation is exactly the same as that of $\mathcal{R}^*_{a, \true}$,
except that the value of $\vr{u}$ is not changed.

\para{Dimension 7}
To estimate the dimension of the VASS represented by $\mathcal P$, notice that $\mathcal{I}'$ uses counters
$\vr{x}, \vr{x}',  \vr{y}, \vr{z}, \vr{e}, \vr{f}$ and components $\mathcal{R}^*_{s_i, p}$ use additionally $\vr{v}, \vr{v}',\vr{e}',\vr{u}$.
However, by Claim~\ref{claim:intro} the final values of $\vr{x},\vr{x}',\vr{z}$ computed by $\mathcal{I}'$ are $0$ in every
halting run of $\mathcal P$, hence
the three counters can be reused in components, 
which reduces the number of counters to~7.

\section{Doubly Exponential Shortest Runs}
\label{sec:fractions}


%
\begin{theorem}\label{thm:4vass}
%
There is a family of binary 4-VASS $(\mathcal{V}_n)_{n\in\N}$ of size $\O(n^3)$ such that every halting run of 
$\mathcal{V}_n$ is of length doubly exponential in $n$.
\end{theorem}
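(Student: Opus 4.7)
The plan is to adapt the divisibility-via-fractions template of Section~\ref{sec:exponential} to a more intricate family of fractions, exploiting the binary encoding of constants in a crucial way. In Theorem~\ref{th:exponential}, chaining weak multiplications by $\frac{i+1}{i}$ for $i=1,\ldots,n$ produced a divisibility requirement of order $\mathrm{lcm}(1,\ldots,n+1)/(n+1)$, which is only singly exponential in $n$; to push the shortest runs to doubly exponential length, I would design a new sequence $r_1, r_2, \ldots, r_L$ of fractions whose product, in lowest terms, has denominator $D$ with $\log D = 2^{\Omega(n)}$, but whose structure still admits a counter-program encoding of size $O(n^3)$.

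The key structural requirement on the sequence is that only $O(n)$ distinct fractions should appear, each with numerator and denominator of magnitude $2^{O(n)}$ (so encodable in $O(n)$ bits in binary), while the multiplicities $k_r$ with which they are used can be as large as $2^{\Theta(n)}$, and the fractions are arranged in an order such that the denominator of the reduced product remains $\prod_r q_r^{k_r}$ with no prime factor getting cancelled by later numerators. A concrete candidate is to take the denominators to be $O(n)$ distinct primes of magnitude $\approx 2^n$ and to use each in a consecutive block of $2^{\Theta(n)}$ applications, giving $\log D = \Omega(n \cdot 2^n)$, which is doubly exponential in $n$.

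Once the sequence is fixed, it is encoded as a binary VASS by re-using Algorithm~\ref{alg:weak-multiplication} as a weak-multiplier subroutine for each distinct fraction $r=c/d$ (with $c$ and $d$ written in binary), wrapped inside an outer loop driven by a dedicated iteration counter loaded in a single step with the binary constant $k_r$. Two counters $\vr{x}, \vr{y}$ realise the weak multiplier, a third counter drives the outer iteration loops, and a fourth counter carries out the final exactness check, for $4$ counters in total. The $O(n^3)$ size bound comes from $O(n^2)$ command lines (the $O(n)$ distinct fractions each contribute $O(n)$ lines because of the nested scheduling and the verification scaffolding) carrying $O(n)$-bit binary constants.

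The correctness argument mirrors that of Theorem~\ref{th:exponential}: by Claim~\ref{claim:weak-multiplication} each weak-multiplication block can only undershoot the exact product, so the value of $\vr{x}$ at the end of the chain is at most $X \cdot \prod r_i$, where $X$ is its value entering the chain, and a closing exactness check that compares $\vr{x}$ to the auxiliary counter via the reduced total ratio forces every weak multiplication along the way to have been exact, whence $X$ must be a positive multiple of the doubly exponential $D$. Every halting run must therefore spend $\Omega(D)$ steps on counter updates alone, giving the claimed doubly exponential length. The main technical obstacle is the combinatorial design of the fraction sequence itself: the factorial pattern of Section~\ref{sec:exponential} is structurally too sparse to reach doubly exponential denominators, and one has to construct a genuinely new family in which a pool of only $O(n)$ distinct fractions can, through carefully ordered exponentially long schedules, accumulate prime-factor exponents of total size $2^{\Omega(n)}$ in the reduced denominator without being cancelled along the way.
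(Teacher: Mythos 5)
There is a genuine gap, and it is in the one place you correctly flag as the hard part, namely the design of the fraction family. You ask the reduced total product to have denominator $D$ with $\log D = 2^{\Omega(n)}$ and ``no prime factor getting cancelled by later numerators,'' taking $O(n)$ distinct primes of size $\approx 2^n$ each used $2^{\Theta(n)}$ times. But your closing exactness check compares $\vr{x}$ to a shadow of its initial value \emph{via the reduced total ratio}, which in a binary 4-VASS of size $O(n^3)$ means the numerator and denominator of that ratio must appear as binary constants on transitions, hence in polynomially many bits. A doubly-exponential $D$ needs $2^{\Omega(n)}$ bits, so the check you rely on cannot be expressed within the stated size budget. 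The ``no cancellation'' requirement you impose is in fact the precise opposite of what makes the construction work.

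Lemma~\ref{lem:fractions} meets the real constraint: the $k$ distinct fractions $f_1,\ldots,f_k$ have $O(k^2)$-bit numerators and denominators, they are applied $2^1,\ldots,2^k$ times via $\hp$-gadgets whose iteration counter is loaded once with the binary constant $2^i$, \emph{and yet} the reduced total product $f = \prod_i f_i^{2^i}$ also has only $O(k^2)$-bit numerator and denominator (description size $\leq 4^{2(k^2+k)}$), so the final check \sub{\vr{t}}{b}, \sub{\vr{x}}{a} is encodable. This is possible only because of heavy cancellation across blocks, obtained from the telescoping identity $(1/r_i)^{2^1}\cdots(1/r_i)^{2^{i-1}}\cdot r_i^{2^i}=r_i^2$ with $r_i = 1 + 2^{k-i}/4^k$ and $f_i = r_i/(r_{i+1}\cdots r_k)$, giving $\prod_i f_i^{2^i} = (r_1\cdots r_k)^2$. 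The doubly-exponential lower bound on run length does not come from the size of the reduced ratio at all; it is extracted locally from the first block, where Proposition~\ref{prop:hp} forces the initial value of $\vr{x}$ to be divisible by $b_k^{2^k}$ with $b_k\geq 2$. Your remaining scaffolding (weak multipliers from Claim~\ref{claim:weak-multiplication}, $\hp$-style outer loops driven by a binary-loaded counter, a shadow counter for the final comparison, the four-counter budget) matches the paper, but without a fraction family that simultaneously gives doubly-exponential local divisibility and a polynomially-sized reduced product, the plan does not close.
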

\newcommand{\dsize}{description size\xspace}
%
In this section we prove the theorem.
Define the \emph{\dsize} of an irreducible fraction $\frac{p}{q}$ as $\max\{p, q\}$.
We start with a key technical lemma stating existence of arbitrarily long increasing sequences of 
rationals greater than $1$, of \dsize exponential with respect to $k$,
with the property that the result of multiplying 
consecutive exponential powers of these rationals has only exponential
(and not doubly exponential) \dsize.

\begin{lemma}\label{lem:fractions}
For each $k \ge 1$ there are $k$ rational numbers 
\begin{align} \label{eq:increasing}
1 < f_1 < \ldots < f_k = 1 + \frac{1}{4^k},  
\end{align}
of \dsize bounded by $4^{k^2+k}$, such that the \dsize of $f$ defined by
\begin{align} \label{eq:magicfract}
f = (f_k)^{2^k} \cdot \ldots \cdot (f_2)^{2^2} \cdot  (f_1)^{2^1}
\end{align}
is bounded by $4^{2(k^2+k)}$.
\end{lemma}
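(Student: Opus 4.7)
The plan is to exhibit an explicit family $f_1,\ldots,f_k$ and then verify the four required properties (monotonicity, the boundary value $f_k = 1+4^{-k}$, the individual \dsize bound, and the \dsize bound on the product $f$). The last condition is the heart of the matter: taken on its own, $f_k^{2^k}$ has \dsize at least $4^{k\cdot 2^k}$ (doubly exponential in~$k$), so the other factors $f_i^{2^i}$ must be designed to cancel essentially all of both the numerator and the denominator of $f_k^{2^k}$.

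The natural route is a telescoping construction. I would look for a sequence of positive integers $A_0, A_1, \ldots, A_k$ and present each $f_i$ as a ratio involving only $A_{i-1}$ and $A_i$, for example of the form $f_i = A_{i-1}^{\,2}/A_i$ or $f_i = A_i/A_{i-1}^{\,2}$, with the exponents in the ansatz chosen so that in the weighted product $\prod_i f_i^{2^i}$ every interior $A_i$ cancels exactly and only boundary terms $A_0, A_k$ (and possibly $f_k$ itself) survive. The sequence $(A_i)$ is then built by working backwards from the target $f_k = (4^k+1)/4^k$, with each preceding $A_i$ chosen to keep $f_{i+1} > f_i > 1$ while ensuring that the \dsize of $A_i$ grows no faster than $4^{O(k\cdot i)}$. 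A plausible starting guess is to take $A_i$ polynomial in $4^i$, with leading behaviour tuned to deliver both the monotonicity of the $f_i$ and the correct boundary value.

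Once the $A_i$ are fixed, the four verifications split cleanly. Monotonicity and the identity $f_k = 1+4^{-k}$ are direct from the explicit definition; the bound $\max\{p_i,q_i\}\leq 4^{k^2+k}$ on irreducible representations $f_i=p_i/q_i$ reduces to a size estimate on $A_{i-1}$ and $A_i$; and the \dsize bound on $f$ follows because, after telescoping, the surviving numerator and denominator are each bounded by a product of two quantities of \dsize at most $4^{k^2+k}$, giving the target bound $4^{2(k^2+k)}$.

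The principal obstacle is the simultaneous satisfaction of all four constraints. A naive telescoping either forces some $f_i < 1$, or produces a sequence $(A_i)$ growing too fast to respect the individual \dsize bound, or fails to hit $f_k = 1 + 4^{-k}$ exactly. The delicate part is tuning the $A_i$ so that the telescoping collapses the $2$-adic valuation of $f$ from the naive $-2k\cdot 2^k$ (the excess contributed by $4^{k\cdot 2^k}$ in the denominator of $f_k^{2^k}$) all the way down to $O(k^2)$, while keeping each individual fraction in range. It is precisely this cancellation condition that dictates the specific exponential weights $2^i$ appearing in the definition of $f$.
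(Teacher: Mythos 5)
Your proposal correctly senses that a telescoping construction is the right shape, and the paper does indeed use one, but what you have written is a strategy outline rather than a proof, and the concrete ansatz you suggest would not work as stated. Take $f_i = A_{i-1}^2/A_i$: the exponent of an interior $A_j$ ($1\le j\le k-1$) in $\prod_i f_i^{2^i}$ is $2^{j+2}-2^j = 3\cdot 2^j\neq 0$, so nothing cancels. The alternative $f_i = A_i/A_{i-1}^2$ fares no better ($-3\cdot 2^j$). The weight pattern $2^i$ forces the form $f_i = B_i/B_{i+1}^2$, under which the product collapses to $B_1^2/B_{k+1}^{2^{k+1}}$; to avoid a doubly exponential denominator you need $B_{k+1}=1$, which rules out an integer sequence $(A_i)$, because it would force $f_k = B_k$ to be an integer while $f_k = 1+4^{-k}$ is not. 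So the objects one telescopes must be rationals, not integers, and this is exactly what the paper does: it sets $r_i := 1 + 2^{k-i}/4^k$, defines $B_i := r_i r_{i+1}\cdots r_k$ (so $B_{k+1}=1$), and takes $f_i := B_i/B_{i+1}^2 = r_i/(r_{i+1}\cdots r_k)$; the product then equals $f = B_1^2 = (r_1\cdots r_k)^2$, and all the \dsize bounds drop out because each $r_i$ has numerator and denominator at most $4^{k+1}$.

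More seriously, your proposal treats the four required properties as routine verifications once the sequence is chosen, but the paper's proof devotes most of its effort precisely to one of them, namely $f_1>1$, equivalently $r_1 > r_2\cdots r_k$. This is a genuine inequality whose proof passes through $r_i^2 > r_{i-1}$, then $r_k^{2^{k-1}-1} > r_2\cdots r_k$, and finally a binomial/geometric-series estimate showing $r_1 > \bigl(1 + 4^{-k}\bigr)^{2^{k-1}-1}$. None of this is touched in your proposal; you explicitly flag it as "the principal obstacle" but do not resolve it. As it stands the proposal identifies the difficulty without supplying the construction or the inequality that overcomes it, so it does not constitute a proof of the lemma.
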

%
%
%
A distinguished counter $\vr{x}$ in the 4-VASS $\mathcal{V}_k$ will play a special role:
in every halting run, $\vr{x}$ will be exactly multiplied by consecutive powers as in~\eqref{eq:magicfract}.
As the denominator of the irreducible form of $f_k$ is at least $2$, the counter $\vr{x}$, just before the very first multiplication
by $(f_k)^{2^k}$, must be divisible by the denominator of $f_k$ to the power $2^k$, which is
doubly exponential in~$k$.
In consequence, every halting run has to be doubly exponentially long.

\twocol{.35}{
\begin{algorithm}[H]
\caption{Program fragment $\hp(c,d)$; counters $\vr{x}, \vr{y}$ and $\vr{z}$ correspond 
to dimension 1, 2 and 3, respectively, of the VASS.}
\label{alg:hp}
\begin{algorithmic}[1]
\Loop \label{l:hpstart}
  \Loop\label{l:hp1}
    \State \dec{\vr{x}} \quad \inc{\vr{y}}
  \EndLoop  \label{l:hp1e}
  \Loop\label{l:hp2}
    \State \add{\vr{x}}{c} \quad\sub{\vr{y}}{d}
  \EndLoop\label{l:hp2e}
  \State \dec{\vr{z}}
\EndLoop \label{l:hpend}
\end{algorithmic}
\end{algorithm}
\movedown
}{\qquad}{.55}{
As before, the main difficulty is to turn weak multiplications into exact ones.
To this aim we will rely on Lemma~\ref{lem:fractions} 
and on a well-known weakly exponentiating 3-VASS gadget of Hopcroft and Pansiot~\cite{HopcroftP79}:
\movedown


\begin{tikzpicture}[->,>=stealth',shorten >=1pt,auto,node distance=2.5cm,semithick]

\node (p) {$p$};
\node (q) [right of=p] {$q$};

\path[->]
(p) edge[->, in=140, out=220, min distance=0.5cm,loop] node {$(-1,1,0)$}(p)
(q) edge[->, in=320, out=40, min distance=0.5cm,loop] node {$(c,-d,0)$}(q)
(p) edge[->, bend left] node[above] {$(0,0,0)$} (q)
(q) edge[->, bend left] node[above] {$(0,0,-1)$} (p)
;
\end{tikzpicture}

The gadget is represented by the program fragment $\hp(c,d)$ shown in Algorithm~\ref{alg:hp}.
}
%

\moveup\moveup\moveup
\begin{proposition} \label{prop:hp}
Consider program fragment $\hp(c, d)$ for an irreducible fraction $\frac{c}{d}>1$, and initial values
$x_0, y_0, z_0$ of counters $\vr{x}$, $\vr{y}$ and $\vr{z}$.
In every run, the respective final values $x_1, y_1, z_1$ satisfy
\[x_1 + y_1 \leq (x_0 + y_0) \cdot \Big(\frac{c}{d}\Big)^{z_0 - z_1}. 
\]
Moreover, there is a run satisfying $x_1 = (x_0 + y_0)\cdot \Big(\frac{c}{d}\Big)^{z_0}$ if, and only if,
$x_0 + y_0$ is divisible by $d^{z_0}$. In this case $y_1 = z_1 = 0$.
\end{proposition}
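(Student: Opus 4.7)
The plan is to reduce one iteration of the outer loop (lines~\ref{l:hpstart}--\ref{l:hpend}) to the weak multiplication analysed in Claim~\ref{claim:weak-multiplication} and then to induct on the number of outer-loop iterations. Indeed, each iteration runs the two inner loops in sequence, which is exactly the code of Algorithm~\ref{alg:weak-multiplication}, and then decrements~$\vr{z}$ by~$1$. Since $\vr{z}$ must stay nonnegative and is modified nowhere else, the number of outer iterations in any run equals exactly $z_0 - z_1$.

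Let $s_0 = x_0 + y_0$ and let $s_i$ denote the value of $\vr{x} + \vr{y}$ after the $i$-th outer iteration. By Claim~\ref{claim:weak-multiplication} applied to each iteration, $s_i \leq s_{i-1}\cdot \tfrac{c}{d}$, so by induction $s_i \leq s_0 \cdot (c/d)^i$. Taking $i = z_0 - z_1$ yields the inequality $x_1 + y_1 \leq (x_0 + y_0)\cdot (c/d)^{z_0 - z_1}$.

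For the second statement, suppose first that $x_1 = (x_0 + y_0)\cdot (c/d)^{z_0}$. Combining this with the inequality just established forces $z_1 = 0$, $y_1 = 0$, and equality $s_i = s_{i-1}\cdot \tfrac{c}{d}$ in every iteration. The "moreover" part of Claim~\ref{claim:weak-multiplication} then forces, in iteration~$i$, that the first inner loop exits with $\vr{x} = 0$ (so the value of $\vr{y}$ at that point is $s_{i-1}$) and the second inner loop exits with $\vr{y} = 0$. The latter requires $d \mid s_{i-1}$, so $s_i = s_{i-1}\cdot c/d$ is a genuine integer. Iterating this for $i = 1,\ldots,z_0$ gives $s_{i-1} = s_0\cdot c^{i-1}/d^{i-1}$, and the divisibility $d \mid s_{i-1}$ for every such $i$, combined with $\gcd(c,d)=1$ because $\frac{c}{d}$ is irreducible, yields $d^i \mid s_0$ for every $i \leq z_0$; in particular $d^{z_0} \mid x_0 + y_0$.

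Conversely, if $d^{z_0} \mid x_0 + y_0$, we construct the desired halting run by iterating the outer loop exactly $z_0$ times and, in each iteration, iterating both inner loops maximally: the divisibility ensures that this is possible at every step because $s_{i-1} = s_0\cdot (c/d)^{i-1}$ is a multiple of $d$ for $i \leq z_0$. The resulting run ends with $y_1 = z_1 = 0$ and $x_1 = s_0 \cdot (c/d)^{z_0}$, completing the proof. The only delicate point, and the main thing to keep track of, is the coprimality argument propagating divisibility by $d$ across all $z_0$ iterations; everything else is a direct invocation of Claim~\ref{claim:weak-multiplication}.
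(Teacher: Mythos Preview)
Your proof is correct and aligns with the paper's for the first inequality and the ``if'' direction of the second part: both reduce one outer iteration to Claim~\ref{claim:weak-multiplication} and induct, then construct the maximal-iteration run when $d^{z_0}\mid x_0+y_0$.

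For the ``only if'' direction the paper takes a noticeably shorter route than your per-iteration trace: since $x_1 = (x_0+y_0)\cdot c^{z_0}/d^{z_0}$ is an integer and $\gcd(c,d)=1$, one gets $d^{z_0}\mid x_0+y_0$ in one line, and then $z_1=0$ and $y_1=0$ drop out of the already-proved inequality. Your iterationwise argument also works, but watch the citation: Claim~\ref{claim:weak-multiplication}'s ``moreover'' clause characterises when the \emph{$x$-output} equals $(x_0+y_0)\cdot\tfrac{c}{d}$, not when the \emph{sum} $x_1+y_1$ does, so from $s_i=s_{i-1}\cdot\tfrac{c}{d}$ alone you cannot literally invoke it. The conclusion you need---that sum equality already forces $x'=0$ and the iteration's final $\vr{y}=0$---is true and follows from the same two-inequality estimate in the proof of Claim~\ref{claim:weak-multiplication}, so the gap is only a matter of phrasing.
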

\begin{proof}
The two inner loops (lines \ref{l:hp1}--\ref{l:hp2e}) coincide with the counter program fragment shown in 
Algorithm~\ref{alg:weak-multiplication}.
As the outer loop is executed $z_1 - z_0$ times,
the first part follows by Claim~\ref{claim:weak-multiplication}.

For the second part, assume $x_0 + y_0$ is divisible by $d^{z_0}$, and consider the unique run where all the loops are
iterated maximally, by which we mean:
\begin{itemize}
\item the outer loop (lines \ref{l:hpstart}--\ref{l:hpend}) is executed exactly $z_0$ times;
\item whenever execution of the first inner loop (lines \ref{l:hp1}--\ref{l:hp1e}) ends, the value of $\vr{x}$ is 0;
\item whenever execution of the second inner loop (lines \ref{l:hp2}--\ref{l:hp2e}) ends, the value of $\vr{y}$ is 0;
\end{itemize} 
Thus every execution of the two inner loops necessarily multiplies the sum $\vr{x} + \vr{y}$ by $\frac{c}{d}$, and
consequently, after $i$ iterations of the outer loop the values of respective counters $x', y', z'$ satisfy 
\begin{align} \label{eq:mid}
x' \ = \ (x_0 + y_0) \cdot \Big(\frac{c}{d}\Big)^{z_0 - i} \qquad
y' \ = \ 0 \qquad
z' \ = \ z_0 - i.
\end{align}
Repeating the multiplication $z_0$ times yields $x_1 = \Big(\frac{c}{d}\Big)^{z_0}$ and
$y_1 = z_1 = 0$, as required.

Conversely, suppose $x_1 = (x_0 + y_0) \cdot \Big(\frac{c}{d}\Big)^{z_0}$.
As $c$ and $d$ are co-primes, the sum of initial values $x_0 + y_0$ is thus forcedly divisible by $d^{z_0}$.
By the first part we know that the outer loop has been iterated maximally, hence $z_1 = 0$.
Then $y_1 = 0$ follows by the first part.
%
%
%
%
%
\end{proof}

\para{Construction of $\mathcal{V}_k$}
Fix $k\geq 1$. 
Let $f_i = \frac{a_i}{b_i}$, for $i\leq i\leq k$, be the fractions from Lemma~\ref{lem:fractions},
and let $f = \frac{a}{b}$ be the result of their multiplication as in~\eqref{eq:magicfract}.
We thus have:
\begin{align} \label{eq:ab}
\Big(\frac{a_1}{b_1}\Big)^2 \cdot
\Big(\frac{a_2}{b_2}\Big)^{2^2} \cdot
\ \ldots \ 
\cdot \Big(\frac{a_k}{b_k}\Big)^{2^k}
\quad = \quad \frac{a}{b}.
\end{align}
Algorithm~\ref{alg:4vass} (on the left below) shows the counter program representing the 4-VASS $V_k$ (on the right below),
using four counters $\vr{t}, \vr{x}, \vr{y}$ and $\vr{z}$. 
The constants appearing in increment and decrement commands are exponential in $k$, 
represented in binary in size $\mathcal{V}_k$ is $\O(k)$.
The length of $\mathcal{V}_k$ is $\O(k)$ and hence its size is $\O(k^3)$.




\twocol{.45}{
\begin{algorithm}[H]
\caption{Program representing 4-VASS $\mathcal{V}_k$ shown on the right.
Counters $\vr{t}, \vr{x}, \vr{y}$ and $\vr{z}$ correspond to consecutive dimensions.}
\label{alg:4vass}
\begin{algorithmic}[1]
\State \initialise
\State \inc{\vr{t}} \quad \add{\vr{x}}{1} \label{l:start}
\Loop
  \State \inc{\vr{t}} \quad \add{\vr{x}}{1}\label{l:firstloop}
\EndLoop
\For {\, $i$ \, := \, $k$ \, \textbf{down to } $1$}\label{l:firstfor}
  \State \add{\vr{z}}{2^i}
  \Loop \label{l:loopinfor}
    \Loop \label{l:for1start}
      \State \dec{\vr{x}} \quad \inc{\vr{y}}
    \EndLoop  \label{l:for1end}
    \Loop \label{l:for2start}
      \State \add{\vr{x}}{a_i} \quad\sub{\vr{y}}{b_i} \label{l:lastloopinfor}
    \EndLoop \label{l:for2end}
    \State \dec{\vr{z}} \label{l:lastfor}
  \EndLoop
\EndFor
\Loop\label{l:lastloop1}
  \State \sub{\vr{t}}{b} \quad \sub{\vr{x}}{a}\label{l:lastloop}
\EndLoop
\State \haltz{\vr{t}} 
\label{l:lasthalt}
\end{algorithmic}
\end{algorithm}
\movedown
}
{\qquad}{.55}{
\scalebox{.75}{

\begin{tikzpicture}[->,>=stealth',shorten >=1pt,auto,node distance=2.4cm,semithick]

\node(sp) {$\cdot$};
\node(s) [right of=sp] {$\cdot$};

\node (pk) [below = 1.6cm of sp] {$p_k$};
\node (qk) [right of=pk] {$q_k$};

\node (pk1) [below of=pk] {$p_{k-1}$};
\node (qk1) [right of=pk1] {$q_{k-1}$};

\node (dotst) [below = 0.5cm of pk1] {$\cdots$};
\node (dots) [below = 0.1cm of dotst] {};

\node (p1) [below = 1.2cm of dots] {$\ p_1\ $};
\node (q1) [right of=p1] {$\ q_1\ $};

\node (e) [below = 1.2cm of p1] {$\cdot$};

\path[->]
(s) edge[->] node[above] {\small $(1,1,0,0)$} (sp)
(sp) edge[->, in=50, out=130, loop] node {\small $(1,1,0,0)$}(sp)
(sp) edge[->] node[left] {\small $(0,0,0, 2^k)$} (pk)
(pk) edge[->, in=140, out=220, loop] node[left] {\small $(0,-1,1,0)$}(pk)
(qk) edge[->, in=320, out=40, loop] node[right] {\small $(0, a_k,-b_k,0)$}(qk)
(pk) edge[->, bend left=25]  (qk)
(qk) edge[->, bend left=25] node[below] {\small $(0,0,0,-1)$} (pk)
(pk) edge[->] node[left] {\small $(0,0,0, 2^{k-1})$} (pk1)
(pk1) edge[->, in=140, out=220, loop] node[left] {\small $(0,-1,1,0)$}(pk1)
(qk1) edge[->, in=320, out=40,loop] node[right=-14pt] {\small $(0, a_{k-1},-b_{k-1},0)$}(qk1)
(pk1) edge[->, bend left=25] (qk1)
(qk1) edge[->, bend left=25] node[below] {\small $(0,0,0,-1)$} (pk1)
(dots) edge[->] node[left, near start] {\small $(0,0,0, 2^{1})$} (p1)
(p1) edge[->, in=140, out=220,loop] node[left] {\small $(0,-1,1,0)$}(p1)
(q1) edge[->, in=320, out=40, loop] node[right] {\small $(0, a_{1},-b_{1},0)$}(q1)
(p1) edge[->, bend left=25]  (q1)
(q1) edge[->, bend left=25] node[below] {\small $(0,0,0,-1)$} (p1)
(p1) edge[->] node[left] {\small } (e)
(e) edge[->, in=230, out=310, loop] node[below] {\small $(-a,-b,0,0)$}(e)
;
\end{tikzpicture}

}
}

\begin{claim} \label{claim:hasrun}
For every $k\geq 0$, the 4-VASS $\mathcal{V}_k$ has a halting run.
\end{claim}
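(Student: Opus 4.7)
The plan is to construct an explicit halting run by choosing a suitable starting value $M$ for the counters $\vr{t}$ and $\vr{x}$ and then appealing to the "if" direction of Proposition~\ref{prop:hp} inductively for each pass through the body of the \textbf{for} macro. The natural choice is
\[
M \ := \ \prod_{i=1}^{k} b_i^{2^i},
\]
and I would use the first loop (lines~\ref{l:start}--\ref{l:firstloop}) to push both $\vr{t}$ and $\vr{x}$ up to $M$ while leaving $\vr{y} = \vr{z} = 0$ just before entering the \textbf{for} macro at line~\ref{l:firstfor}.

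The heart of the proof is an inductive invariant maintained across the passes $i = k, k-1, \ldots, 1$. For each $i$, once line~\ref{l:loopinfor} is reached with $\vr{z}$ having just been set to $2^i$, the block between lines~\ref{l:loopinfor} and~\ref{l:lastfor} is exactly the Hopcroft--Pansiot gadget $\hp(a_i, b_i)$ of Algorithm~\ref{alg:hp}, so Proposition~\ref{prop:hp} applies directly. The invariant I would maintain is that immediately before the $i$-th pass begins,
\[
\vr{y} \ = \ 0, \qquad \vr{z} \ = \ 2^i, \qquad \vr{x} \ = \ \prod_{j \le i} b_j^{2^j} \ \cdot \ \prod_{j > i} a_j^{2^j}.
\]
This value of $\vr{x}$ is evidently divisible by $b_i^{2^i}$, which is precisely the divisibility requirement of Proposition~\ref{prop:hp}; the proposition then furnishes a run of the gadget that multiplies $\vr{x}$ exactly by $(a_i/b_i)^{2^i}$ while zeroing $\vr{y}$ and $\vr{z}$, propagating the invariant to the next pass.

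After the pass $i = 1$ finishes, we will have $\vr{t} = M$, $\vr{x} = \prod_{j=1}^{k} a_j^{2^j}$ and $\vr{y} = \vr{z} = 0$. By~\eqref{eq:ab} the ratio $\vr{x}/\vr{t}$ equals $a/b$, and since $a/b$ is already in lowest terms we get $b \mid M$; writing $M = m b$, we also obtain $\vr{x} = m a$. The final loop in lines~\ref{l:lastloop1}--\ref{l:lastloop} can then be iterated exactly $m$ times to drive both $\vr{t}$ and $\vr{x}$ to $0$, after which \haltz{\vr{t}} at line~\ref{l:lasthalt} succeeds. The only real obstacle in this plan is verifying the divisibility clause of Proposition~\ref{prop:hp} at each pass; this is the reason for front-loading all the prime-power factors $b_j^{2^j}$ into $M$, which makes the invariant essentially automatic.
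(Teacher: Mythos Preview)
Your proof is correct and follows essentially the same route as the paper: set the initial value of $\vr{t}=\vr{x}$ to $M=\prod_{i=1}^{k} b_i^{2^i}$, apply the ``if'' direction of Proposition~\ref{prop:hp} inductively across the passes $i=k,\ldots,1$ (the divisibility hypothesis being guaranteed by the explicit factor $b_i^{2^i}$ in your integer-form invariant, which is just a rewriting of the paper's $x_j = N\cdot\prod_{\ell>j}(a_\ell/b_\ell)^{2^\ell}$), and then clear $\vr{t}$ and $\vr{x}$ with the final loop. Your handling of the last loop---observing that $b\mid M$ and iterating $m=M/b$ times---is in fact more precise than the paper's, which writes ``$N$ iterations'' where $N/b$ is meant.
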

\begin{proof}
Put
$
N \ := \ \prod_{i = 1 \dots k} (b_i)^{2^i}.
$
By performing the first loop (lines 3--\ref{l:firstloop}) exactly $N-1$ times,
the run reaches the following valuation of counters $\vr{x}, \vr{y}, \vr{z}$:
\begin{align} \label{eq:jbase}
x_k \ = \ N \qquad\qquad y_k \ = \ z_k \ = \ 0.
\end{align}
Notice that the outer loop (lines \ref{l:loopinfor}--\ref{l:lastfor}) coincides with the program fragment $\hp(a_i, b_i)$.
We use the second part of Proposition~\ref{prop:hp} for consecutive iterations of the \textbf{for} macro.
The proposition allows us to derive a run 
where the values $x_j$, $y_j$, $z_j$ of counters $\vr{x}$, $\vr{y}$, $\vr{z}$,
after $k-j$ iterations of the \textbf{for} macro (for $j \in \{0, \ldots, k\}$), satisfy:
\begin{align} \label{eq:j}
x_j \ = \ N \cdot 
\Big(\frac{a_{j}}{b_{j}}\Big)^{2^{j}} \cdot
\ \ldots \ 
\cdot \Big(\frac{a_k}{b_k}\Big)^{2^k}
 \qquad y_j \ = \ z_j \ = \ 0.
\end{align}
Indeed, by induction with respect to $k-j$ (using~\eqref{eq:jbase} as induction base for $j = k$), we argue as follows:
if~\eqref{eq:j} holds then $x_j$ is divisible by $(b_{j-1})^{2^{j-1}}$, and hence by Proposition~\ref{prop:hp} there is a 
continuation of the run that yields
\[
x_{j-1} \ = \ x_j \cdot \Big(\frac{a_{j-1}}{b_{j-1}}\Big)^{2^{j-1}}
\qquad y_{j-1} \ = \ z_{j-1} \ = \ 0.
\]
In consequence, for $j = 0$ we obtain, using~\eqref{eq:ab}:
\[
x_0 \ = \ N \cdot \frac{a}{b} \qquad\qquad y_0 \ = \ z_0 \ = \ 0.
\]
As the counter $\vr{t}$ is not modified inside the \textbf{for} loop (lines~\ref{l:firstfor}--\ref{l:lastfor}), its value
is still equal to $N$ after \textbf{for} loop is finished.
Thus, by executing $N$ iterations of the last loop (in lines~\ref{l:lastloop1}--\ref{l:lastloop}) 
we reach the value $0$ of all the four counters $\vr{t}, \vr{x}, \vr{y}, \vr{z}$
and hence halt in line~\ref{l:lasthalt}. 
Summing up, every $\mathcal{V}_k$ admits a halting run.
\end{proof}

\begin{proof}[Proof of Theorem~\ref{thm:4vass}]
We argue that every halting run of $\mathcal{V}_k$ has length at least doubly exponential in $k$.
Consider an arbitrary halting run, i.e., a run reaching the final value $\vr{t} = 0$ 
in line~\eqref{l:lasthalt}.
As before, let $x_j$, $y_j$ and $z_j$, for $j = 0, \ldots, k$, stand for 
the values of counters $\vr{x}$, $\vr{y}$ and $\vr{z}$, respectively, after $k-j$ iterations of the \textbf{for} macro.
Let $x_k = N \geq 1$ be the value of the counters $\vr{t}$ and $\vr{x}$ after exiting from the first loop (lines~3--\ref{l:firstloop});
cf.~\eqref{eq:jbase}.
The counter $\vr{t}$ is not modified inside the \textbf{for} loop (lines~\ref{l:firstfor}--\ref{l:lastfor}). 
Thus the last loop (in lines~\ref{l:lastloop1}--\ref{l:lastloop}) 
has to be performed exactly $N$ times, which implies
\begin{align} \label{eq:geq}
x_0 \ \geq \ N \cdot \frac{a}{b}.
\end{align}
%

Let $n_k, n_{k-1}, \dots, n_1$ stand for the number of iterations of the outer loop 
(lines~\ref{l:loopinfor}--\ref{l:lastfor}) in consecutive iterations of  the \textbf{for} macro.
By the very structure of $\mathcal{V}_k$ we know that, for every $1 \leq i \leq k$,
\begin{align} \label{eq:nj}
\sum_{j = i}^k n_j \ \leq \ \sum_{j = i}^k 2^j.
\end{align}
We aim to show that the inequality~\eqref{eq:geq} implies $n_j = 2^j$ for every $j \in \{1, \ldots, k\}$.
As  the outer loop (lines \ref{l:loopinfor}--\ref{l:lastfor}) coincides with the program fragment $\hp(a_i, b_i)$, we may apply
the first part of Proposition~\ref{prop:hp} to derive, similarly as above:  
\begin{align} \label{eq:j2}
x_j \ \leq \ N \cdot 
\Big(\frac{a_{j}}{b_{j}}\Big)^{n_{j}} \cdot
\ \ldots \ 
\cdot \Big(\frac{a_k}{b_k}\Big)^{n_k}.
\end{align}
%
%
Claim~\ref{claim:flow} will imply that, roughly speaking, the biggest value of $x_j$ is obtained,
 if in every unfolding of the \textbf{for} macro we perform the maximal possible number of iterations
of the outer loop, and hence finish with the counter value $\vr{z} = 0$. 


\begin{claim}\label{claim:flow}
Assuming~\eqref{eq:nj}, 
$\Big(\frac{a_1}{b_1}\Big)^{n_1} \cdot
\Big(\frac{a_2}{b_2}\Big)^{n_2} \cdot
\ \ldots \ 
\cdot \Big(\frac{a_k}{b_k}\Big)^{n_k}
\ \leq \ \frac{a}{b}$.
The equality holds if, and only if, $n_j = 2^j$ for all $j \in \set{1,\ldots,k}$.
\end{claim}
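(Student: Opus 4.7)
The plan is to take logarithms of both sides and apply summation by parts (Abel summation), so as to rewrite $\sum_{j=1}^k n_j \log f_j$ as a linear combination of the suffix sums $S_j := \sum_{l=j}^k n_l$. This is the natural move, because the hypothesis~\eqref{eq:nj} is precisely a collection of upper bounds on these suffix sums.

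Concretely, setting $A_j := \log f_j$ for $j \geq 1$ together with the convention $A_0 := 0$, and $T_j := \sum_{l=j}^k 2^l$, summation by parts will yield the identity
\[
\sum_{j=1}^k n_j A_j \ = \ \sum_{j=1}^k S_j \, (A_j - A_{j-1}),
\]
where the $j = k$ boundary term vanishes because $S_{k+1} = 0$ and the $j = 1$ boundary term is absorbed by $A_0 = 0$. The crucial observation is that the strict chain $1 < f_1 < f_2 < \cdots < f_k$ from~\eqref{eq:increasing} makes every coefficient $A_j - A_{j-1}$ \emph{strictly} positive. Applying the hypothesis $S_j \leq T_j$ termwise then gives
\[
\sum_{j=1}^k n_j A_j \ \leq \ \sum_{j=1}^k T_j \, (A_j - A_{j-1}) \ = \ \sum_{j=1}^k 2^j A_j \ = \ \log \tfrac{a}{b},
\]
using~\eqref{eq:ab} for the last equality. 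Exponentiating recovers the claimed inequality.

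For the equality case, strict positivity of every coefficient $A_j - A_{j-1}$ turns the termwise inequality into an equivalence: equality throughout forces $S_j = T_j$ for every $j \in \set{1, \ldots, k}$, and telescoping these equalities yields $n_j = 2^j$ for every $j$. The converse implication is immediate from~\eqref{eq:ab}.

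The main technical care is in setting up the summation-by-parts identity correctly at both endpoints and in observing that it is precisely the \emph{strict} monotonicity of the $f_j$ (together with $f_1 > 1$, which handles the $j = 1$ boundary with $A_0 = 0$) that delivers strict positivity of every $A_j - A_{j-1}$, which in turn is what powers the "only if" direction of the equality statement. Once this framework is in place, the proof is essentially one displayed line.
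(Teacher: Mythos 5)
Your proof is correct, and it is a genuinely different route from the paper's. The paper argues combinatorially: any vector $(n_1,\dots,n_k)$ satisfying~\eqref{eq:nj} is reachable from $(2^1,\dots,2^k)$ by repeatedly applying one of two local moves (drop a unit from some $n_i$, or shift a unit from $n_i$ to $n_{i-1}$), and each move multiplies $f(n_1,\dots,n_k)$ by $1/f_i$ or by $f_{i-1}/f_i$, which is strictly less than $1$ precisely because $1 < f_1 < \cdots < f_k$; the inequality and the equality characterisation then follow by walking along such a move sequence. Your Abel-summation rewriting instead expresses $\sum_j n_j \log f_j$ directly as $\sum_j S_j\,(A_j - A_{j-1})$, a nonnegative combination of exactly the suffix sums bounded by~\eqref{eq:nj}, with the strictly positive coefficients $A_j - A_{j-1}$ being the logarithms of the very same ratios $f_1$ and $f_j/f_{j-1}$ that make each of the paper's moves strictly decreasing. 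So the two arguments exploit the identical structural duality between the strict chain~\eqref{eq:increasing} and the family of suffix-sum constraints, but your version sidesteps the reachability claim that the paper states as an ``observe'' (namely that \emph{every} feasible vector arises from $(2^1,\dots,2^k)$ via the two moves while staying feasible and nonnegative throughout), delivering the sharp bound and the equality case in a single display. As a minor bonus, the Abel form applies verbatim to real-valued exponents $n_j$, whereas the move argument leans on integrality.
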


\begin{proof}
For vectors $(n_1, \dots, n_k)$ satisfying~\eqref{eq:nj}, we define the function
$f(n_1, \dots, n_k) = \Big(\frac{a_1}{b_1}\Big)^{n_1} \cdot \ \ldots \ 
\cdot \Big(\frac{a_k}{b_k}\Big)^{n_k}$.
Thus~\eqref{eq:ab} says that
$
f(2^1, \dots, 2^k) = \frac{a}{b}.
$
Observe that any other vector $(n_1, \dots, n_k)$ satisfying~\eqref{eq:nj} is obtained from
$(2^1, \dots, 2^k)$ by applying a number of times one of the following two operations:
\begin{enumerate}
\item decrement some $n_i$ by $1$
\item decrement some $n_i$ by $1$ and increment $n_{i-1}$ by $1$.
\end{enumerate}
As any of this operations strictly decreases the value of $f$, Claim~\ref{claim:flow} follows.
\end{proof}

By the first part of Claim~\ref{claim:flow}, together with inequalities~\eqref{eq:nj} and~\eqref{eq:j2} we deduce $x_0 \ \leq \ N \cdot \frac{a}{b}$ which, combined with~\eqref{eq:geq} yields
the equality:
\begin{align*}
x_0 \ = \ N \cdot \frac{a}{b}.
\end{align*}
The latter equality, together with the second part of Claim~\ref{claim:flow}, implies $n_j = 2^j$ for all $j = 1 \dots k$.
As a consequence, the initial value $N$ of $\vr{x}$ is, due to the second part of Proposition~\ref{prop:hp}, divisible by $M = (b_k)^{2^k}$.
As $1 < \frac{a_k}{b_k} < 2$, we have $b_k \geq 2$, and hence $M$ is doubly exponential with respect to $k$.
It follows that the length of the run is also doubly exponential, as the first inner loop, 
in the first iteration of the \textbf{for} macro ($i = k$), is necessarily executed $N - 1 \geq M - 1$ times.
This concludes the proof of Theorem~\ref{thm:4vass}.
\end{proof}

\section{Conclusion}
\label{sec:conclusion}


Our three main results have provided non-trivial counter-examples that advance the state of the art in the challenging area of the complexity of the reachability problem for VASS (equivalently, VAS and Petri nets).  We have focussed on fixed dimension, and in particular, answered a central question that had remained open since \cite{BlondinFGHM15} and \cite{EnglertLT16}, namely whether reachability for flat VASS given in unary is decidable in nondeterministic logarithmic space for any fixed dimension, by establishing \np hardness in dimension~$7$.
Two specific matters that remain unresolved by this work are:
whether \np hardness of reachability for unary flat VASS is obtainable in any dimension less than~$7$ (and more than~$2$), and
whether binary VASS in dimension~$3$ can have doubly exponential shortest reachability witnesses.

We also remark that, although it has never been made precise, there seems to be an intriguing deep connection between the still open gap from \nl hardness to \np membership of reachability for unary flat $3$-VASS and the still open gap from \pspace hardness to \expspace membership of coverability for $1$-GVAS (1-VASS with pushdown)~\cite{LerouxST15,str17}.
Finally, we expect that the novel family of sequences of fractions developed in Section~\ref{sec:fractions} will have applications beyond the result obtained here.

\newpage

\bibliography{citat}

\appendix


\section{Missing proofs}

\begin{proof}[Proof of Claim~\ref{claim:weak-multiplication}]
As $x' + y' = x_0 + y_0$ and $c > d$ we get:
\begin{align} \label{eq:cd}
x_1 + y_1 \le x' + \frac{c}{d} \cdot y' \leq \frac{c}{d} \cdot (x_0 + y_0).
\end{align}
We now concentrate on the second part of the claim. 
If $x' = y_1 = 0$ then $d \mid (x_0 + y_0)$ and thus $x_1 = (x_0 + y_0) \cdot \frac{c}{d}$.
For the opposite direction, if $y_1 \neq 0$ then $x_1 < x_1 + y_1 \le (x_0 + y_0) \cdot \frac{c}{d}$. 
If $x' \neq 0$ then by~\eqref{eq:cd} we get
\[
x_1 \le 
x' + \frac{c}{d} \cdot y' < \frac{c}{d} \cdot (x' + y') = \frac{c}{d}\cdot (x_0 + y_0). \qedhere
\]
\end{proof}


\begin{proof}[Proof of Claim~\ref{claim:mult}]
By Claim~\ref{claim:fraction} we get
$
x_1 + z_1 \le (x_{n+1} + z_{n+1}) \cdot \prod_{i = 1}^{n} \frac{i+1}{i} = (\vr{x}_{n+1} + \vr{z}_{n+1})\cdot (n+1).
$
Since $z_{n+1} = 0$ this implies the inequality.

Now we step to the second part of the claim. If $z_i = x'_i = 0$ for all $i = 1,\ldots,n$ then 
by Claim~\ref{claim:weak-multiplication} we get $x_i = x_{i+1} \cdot \frac{i+1}{i}$ for every $i$, which implies
$x_1 = x_{n+1} \cdot (n+1)$.

Conversely, suppose for some $i$ we have $z_i \neq 0$ or $x'_i \neq 0$. 
Then by Claim~\ref{claim:weak-multiplication} we get $x_i + z_i < (x_{i+1} + z_{i+1}) \cdot \frac{i+1}{i}$. 
Combined with Claim~\ref{claim:fraction} this yields $x_1 + z_1 < (x_{n+1} + z_{n + 1}) \cdot (n+1)$, 
which concludes the proof as $z_{n + 1} = 0$.
\end{proof}


\begin{proof}[Proof of Theorem~\ref{th:np}]
The size of $\mathcal P$ is polynomial in $n$, $k$ and $m$ and it can be computed in time polynomial with respect to the size of the input: $s_0$,
$S = \set{s_1, \ldots, s_k}$.
$\mathcal{P}$ represents a flat VASS since its explicite \textbf{goto} commands form a directed acyclic graph,
and \textbf{loop} macros are not nested. 
We prove that $\mathcal P$ has a halting run if, and only if, the instance 
$\set{s_0}, \set{s_1, \ldots, s_k}$ of the subset problem is positive.

\subparagraph*{($\Longleftarrow$)}
Fix a subset $R \subseteq S$ with $\sum_{s \in R} s = s_0$.
We define a halting run $\rho$ that starts (cf.~Claim~\ref{claim:intro}) by executing $\mathcal{I}'$ 
to compute $N(n)$ and $N(n) \cdot (k+1)$ in counters $\vr{e}$ and $\vr{f}$, respectively, and $0$ in the remaining counters $\vr{x}, \vr{y}$ and $\vr{z}$.
Then $\mathcal{R}^+_{s_0, \true}$ is executed, and finally
for every $1 \le i \le k$, if $i \in R$ then  $\rho$ jumps to $\mathcal{R}^-_{s_i,\true}$,
otherwise $\rho$ jumps to $\mathcal{R}^-_{s_i,\false}$. 
Inside every component $\mathcal{R}^*_{s_i,p}$ the run $\rho$ \emph{iterates all loops maximally}, by which we mean:
\begin{itemize}
\item the value of $\vr{v}$ is 0 at the exit of the loops in lines 3--6 and in lines 9--12;
\item the value of $\vr{v}'$ is 0 at the exit of the loop in lines 7--8;
\item the value of $\vr{e}'$ is 0 at the exit of the loop in lines 13--14.
\end{itemize}
It remains to observe that by iterating all loops maximally,
in every component $\mathcal{R}^*_{s_i,p}$, for $0 \leq i \leq k$, the counter $\vr{f}$ will be decremented by exactly $N(n)$,
and thus the value of $\vr{f}$ at the end of $\rho$ is zero.
Moreover,  $\mathcal{R}^+_{s_0,\true}$ sets the counter $\vr{u}$ to $s_0$, and for every $s_i \notin R$ the 
value of counter $\vr{u}$ 
is preserved by $\mathcal{R}^-_{s_i, \false}$, and for every $s_i \in R$ the counter $\vr{u}$ is decremented 
by $s_i$ in $\mathcal{R}^-_{s_i,\true}$.
Thus the value of the counter $\vr{u}$ is $0$ at the end of $\rho$, as well as the values of $\vr{y}$ and $\vr{f}$,
as required by the final \textbf{halt}.

\subparagraph*{($\Longrightarrow$)}
Consider a halting run $\rho$ of $\mathcal{P}$, and recall that after $\mathcal{I}'$ the counter 
$\vr{y}$ is not modified any more, and zero-tested by the final \textbf{halt} command of $\mathcal P$. 
By Claim~\ref{claim:intro} the values of $\vr{e}$ and $\vr{f}$ after
$\mathcal{I}'$ are $N(n)$ and $N(n) \cdot (k+1)$, respectively.


The sum of counters $\vr{e}$ and $\vr{e}'$ is invariantly equal $N(n)$ as decrement of one is always accompanied by
increment of the other.
Thus in every component $\mathcal{R}^*_{a,p}$ visited by $\rho$, 
the counter $\vr{f}$ is decreased by at most the initial value of $\vr{e}$, hence by at most $N(n)$. 
Finally, by construction of $\mathcal P$ the run $\rho$ passes through exactly $k+1$ components $\mathcal{R}^*_{a, p}$. 
Therefore, as $\vr{f}$ is zero-tested by the final \textbf{halt} command, we deduce.
\begin{claim}  \label{claim:fdecr}
The run $\rho$ decreases $\vr{f}$  by exactly $N(n)$ in every visited component $\mathcal{R}^*_{a,p}$.
\end{claim}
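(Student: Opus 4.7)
The plan is to assemble the three observations isolated in the preceding paragraph into a simple equality-in-the-sum (pigeonhole) argument. First I would recall, via Claim~\ref{claim:intro}, that $\rho$ reaches the end of $\mathcal{I}'$ with $\vr{f}$ equal to $N(n)\cdot(k+1)$; since $\vr{f}$ is zero-tested by the final \textbf{halt} of $\mathcal{P}$, the total net decrease of $\vr{f}$ after $\mathcal{I}'$ is forced to be exactly $N(n)\cdot(k+1)$. Next I would observe that after $\mathcal{I}'$ the counter $\vr{f}$ is only modified inside the $k+1$ components $\mathcal{R}^+_{s_0,\true}, \mathcal{R}^-_{s_1,*}, \ldots, \mathcal{R}^-_{s_k,*}$ traversed by $\rho$, because the only inter-component instructions visible in Algorithm~\ref{alg:P} are \textbf{goto}s. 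Finally, within a single component the decrease of $\vr{f}$ is bounded by the initial value of $\vr{e}$ in that component, and that initial value never exceeds $N(n)$: indeed, $\vr{e}+\vr{e}'$ is invariantly $N(n)$ because each decrement of $\vr{e}$ in Algorithm~\ref{alg:component} is paired with an increment of $\vr{e}'$ (and conversely), and the invariant is seeded by Claim~\ref{claim:intro}, which yields $\vr{e}=N(n)$ and $\vr{e}'=0$ at the end of $\mathcal{I}'$.

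Summing the per-component decreases then yields at most $(k+1)\cdot N(n)$ on one side, while the forced total on the other side is exactly $(k+1)\cdot N(n)$. Since the upper bound matches the forced total, each of the $k+1$ component-level upper bounds must be attained, i.e., every visited component decreases $\vr{f}$ by exactly $N(n)$, which is the claim.

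I do not anticipate any serious obstacle: the argument is a pure counting equality, and every ingredient is already in place -- namely the values produced by $\mathcal{I}'$, the invariance of $\vr{e}+\vr{e}'$, the per-component upper bound on the decrease of $\vr{f}$ (which relies on the structural fact that in Algorithm~\ref{alg:component} each \textbf{dec} of $\vr{f}$ comes paired with a \textbf{dec} of $\vr{e}$), and the exact count of $k+1$ components on any accepting control path through Algorithm~\ref{alg:P}. The only mild structural check worth making explicit is that no instruction between two consecutive components touches $\vr{f}$, which is immediate from the form of $\mathcal{P}$.
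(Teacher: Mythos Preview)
Your proposal is correct and follows essentially the same approach as the paper: establish the invariant $\vr{e}+\vr{e}'=N(n)$, use it to bound the per-component decrease of $\vr{f}$ by $N(n)$, note that exactly $k+1$ components are traversed, and conclude by a pigeonhole/equality-in-the-sum argument against the zero-test on $\vr{f}$. The paper's justification preceding the claim is precisely this chain of observations, only stated more tersely.
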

\noindent
In consequence, the initial values of component $\mathcal{R}^*_{s_i, p}$, for $0\leq i \leq k$, satisfy:
\begin{align*}
\vr{e}  = N(n) \qquad\qquad
\vr{f}  = N(n) \cdot (k+1 - i) \qquad\qquad
\vr{v}  = \vr{v}' = \vr{e}' = 0.
\end{align*}
Using Claim~\ref{claim:fdecr} we deduce.
\begin{claim}  \label{claim:itermax}
The run $\rho$ iterates all loops maximally in every visited component $\mathcal{R}^*_{a,p}$,
except possibly the last loop in line~\eqref{l:laste} in the last two components $\mathcal{R}^*_{s_k,p}$.
\end{claim}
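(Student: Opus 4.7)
My plan is to prove the claim by induction on the position of a component in the sequence of components visited by $\rho$, strengthening the inductive hypothesis to: at the start of every visited component, $\vr{e} = N(n)$ and $\vr{e}' = 0$. The base case is Claim~\ref{claim:intro}, which gives exactly these values just after $\mathcal{I}'$ and hence at the start of $\mathcal{R}^+_{s_0,\true}$.

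For the inductive step I would exploit two invariants inside a single component $\mathcal{R}^*_{a,p}$: the sum $\vr{e} + \vr{e}'$ is preserved by every command touching these counters (lines~\ref{l:dec1}, \ref{l:dec2}, \ref{l:laste}), and every decrement of $\vr{f}$ is paired with a simultaneous decrement of $\vr{e}$. Since $\vr{e}$ is only decremented prior to line~\ref{l:loop1}, the inductive hypothesis together with Claim~\ref{claim:fdecr} forces $\vr{e}$ to drop all the way to $0$ just before line~\ref{l:loop1}, with no slack. The main technical step is to show that this in turn pins down every inner loop to iterate maximally. Letting $k_j$, $l_j$ be the iteration counts of the two inner loops in the $j$-th pass of the \textbf{for} macro, and $V_j$, $V'_j$ the values of $\vr{v}$, $\vr{v}'$ at the start of that pass, a short calculation analogous to Claim~\ref{claim:weak-multiplication} yields $V_{j+1} + V'_{j+1} \leq 2(V_j + V'_j)$, with equality if and only if both loops iterate maximally in pass $j$. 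Unrolling from $V_0 + V'_0 = 1$ gives $k_j \leq V_j \leq 2^j$, and analogously $k_m \leq 2^m$ for the loop in lines~\ref{l:mbitS}--\ref{l:mbitE}. The total $\vr{f}$-decrement is therefore
\[
\sum_{j=0}^{m-1} b_j\,k_j \;+\; k_m \;\leq\; \sum_{j=0}^{m-1} b_j\cdot 2^j + 2^m \;=\; N(n),
\]
and equality (forced by Claim~\ref{claim:fdecr}) pins $k_m = 2^m$, hence $V_m = 2^m$, which by the iff-clause propagates maximality of both inner loops to every pass $j = 0, \ldots, m-1$.

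It remains to analyse the final loop in lines~\ref{l:loop1}--\ref{l:laste}. By the previous step, when this loop is entered $\vr{e} = 0$ and $\vr{e}' = N(n)$; it runs $l_m$ times and leaves the component with $\vr{e} = l_m$ and $\vr{e}' = N(n) - l_m$. If the current component is not the last one visited, the inductive hypothesis for the next component demands $\vr{e}' = 0$, forcing $l_m = N(n)$, which is maximal iteration of the final loop. If it is the last component, $\rho$ proceeds directly to the \textbf{halt} in line~\ref{l:haltgoto}, which zero-tests only $\vr{y},\vr{u},\vr{f}$, so $l_m$ is unconstrained; the qualifier ``last two components'' in the statement simply records that $\rho$ visits exactly one of $\mathcal{R}^-_{s_k,\true}$ and $\mathcal{R}^-_{s_k,\false}$ and we do not know a priori which. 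The main obstacle is really only the iff-direction of the doubling estimate $V_{j+1}+V'_{j+1} \leq 2(V_j+V'_j)$ and its back-propagation, since this is what rules out all slack in the $\vr{f}$-budget and thereby forces maximality of every loop other than the last one in the final component; everything else is bookkeeping with the $\vr{e}+\vr{e}'$ invariant already used throughout Section~\ref{sec:np}.
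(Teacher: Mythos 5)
Your core technical ingredients are the right ones, and they match what the paper needs: the doubling estimate on $\vr{v}+\vr{v}'$ with its iff-clause, the $\vr{f}$-budget identity $\sum_{j<m} b_j k_j + k_m = N(n)$ forced by Claim~\ref{claim:fdecr}, and the observation that $\vr{e}=0$ just before line~\ref{l:loop1}. However, the inductive framing as you set it up has two genuine gaps.

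First, your inductive hypothesis records only $\vr{e}=N(n)$, $\vr{e}'=0$ at the start of a component, but the unrolling step ``$V_0+V'_0=1$'' silently also requires $\vr{v}=\vr{v}'=0$ at the start of that component (otherwise $V_0+V'_0 > 1$, the bound $k_j\le 2^j$ fails, and the budget no longer forces equality). For the base case this is true because $\mathcal{I}'$ never touches $\vr{v},\vr{v}'$; for later components it only holds because all earlier loops were iterated maximally, so it genuinely must be part of the inductive hypothesis and must be shown to carry forward (which it does, since maximality of the first three loops leaves $\vr{v}=\vr{v}'=0$, and line~\ref{l:laste} does not touch them).

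Second, for the last loop you write that ``the inductive hypothesis for the next component demands $\vr{e}'=0$,'' but the inductive hypothesis at the next component is exactly what you are in the process of establishing, so this step is circular. The non-circular route, and the one the paper takes, is that $\vr{e}=N(n)$ and $\vr{e}'=0$ at the start of \emph{every} component already follows \emph{globally} from Claim~\ref{claim:fdecr}: since every $\vr{f}$-decrement is paired with an $\vr{e}$-decrement, and $\vr{e}$ is only decremented before line~\ref{l:loop1}, the $\vr{f}$-budget in any component is at most the initial value of $\vr{e}$ there, which by the $\vr{e}+\vr{e}'$ invariant is at most $N(n)$; equality to $N(n)$ therefore forces the initial $\vr{e}$ to be exactly $N(n)$, independently of what happened in the previous component. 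Once this is observed, $l_m=N(n)$ for every non-final component follows immediately by comparing $\vr{e}'=N(n)$ before the loop of line~\ref{l:laste} with $\vr{e}'=0$ at the next component's start; no appeal to an inductive hypothesis about $\vr{e},\vr{e}'$ is needed. Induction is still needed, but only to propagate $\vr{v}=\vr{v}'=0$, and there it is clean. With these two adjustments your argument closes.

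Your handling of the last component and of the ``last two components'' wording in the claim is correct.
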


\noindent
Possible non-maximal iteration of the last loop in $\mathcal{R}^*_{s_k,\true}$ and $\mathcal{R}^*_{s_k,\false}$ 
has no impact on the further analysis of the run $\rho$.
As a direct corollary we deduce:
\begin{claim}  \label{claim:udecr}
The run $\rho$ executes the command \incordec{\vr{u}} exactly $a$ times in every visited component $\mathcal{R}^*_{a,\true}$.
\end{claim}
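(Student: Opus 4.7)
The plan is to build on Claim~\ref{claim:itermax}, which already forces maximal iteration of all loops inside every visited component $\mathcal{R}^*_{a,\true}$, save possibly the restoration loop at line~\ref{l:laste} of the last two components $\mathcal{R}^*_{s_k,p}$. Since that loop modifies only $\vr{e}$ and $\vr{e}'$ and touches neither $\vr{v}$ nor $\vr{u}$, it can be safely ignored for the present counting argument.

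The key invariant I would establish by induction on $j$ is that, at the entry to the $j$-th iteration of the outer \textbf{for} macro ($0 \le j \le m-1$), the counter $\vr{v}$ holds exactly $2^j$. The base case $j=0$ holds because $\vr{v}$ is initialised to $1$ by the leading \inc{\vr{v}} command. For the inductive step, maximal iteration of the first inner loop starting at line~\ref{l:weak1} forces it to run exactly $2^j$ times, transferring the full value of $\vr{v}$ into $\vr{v}'$; maximal iteration of the second inner loop starting at line~\ref{l:bits} then empties $\vr{v}'$ back into $\vr{v}$, doubling the latter to $2^{j+1}$. A symmetric argument for the loop treating the oldest bit (lines~\ref{l:mbitS}--\ref{l:mbitE}), entered with $\vr{v} = 2^m$, shows it must execute exactly $2^m$ times under maximal iteration.

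With this invariant in hand, counting the executions of \incordec{\vr{u}} is immediate. Since $p = \true$, the command \incordec{\vr{u}} at line~\ref{l:uuu} is unfolded into the body of the $j$-th iteration of the \textbf{for} macro precisely when $a_j = 1$, and then fires once per inner-loop iteration, contributing $a_j \cdot 2^j$ firings in that iteration; analogously line~\ref{l:mbitE} contributes $a_m \cdot 2^m$ firings. Summing over $j$ yields $\sum_{j=0}^{m} a_j \cdot 2^j = a$, exactly as the claim demands. I do not anticipate a real obstacle here: Claim~\ref{claim:itermax} has already done the heavy lifting, and what remains is a transparent bookkeeping of the dynamics of $\vr{v}$ under maximal iteration.
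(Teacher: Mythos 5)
Your proof is correct and takes essentially the same route the paper intends: Claim~\ref{claim:itermax} forces the counter $\vr{v}$ to hold exactly $2^j$ at the start of the $j$-th iteration of the \textbf{for} macro (and $2^m$ at entry to the tail loop at lines~\ref{l:mbitS}--\ref{l:mbitE}), so the command \incordec{\vr{u}} fires $a_j\cdot 2^j$ times in iteration $j$ and $a_m\cdot 2^m$ times in the tail loop, totalling $\sum_{j=0}^{m} a_j\cdot 2^j = a$. The paper states the claim as a direct corollary of Claim~\ref{claim:itermax} without spelling this out, relying on the earlier remark that after $i$ iterations of the \textbf{for} loop $\vr{v}$ equals $2^i$; your write-up simply makes that bookkeeping explicit.
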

\noindent
Therefore, the value of $\vr{u}$ is incremented by $s_0$ in component $\mathcal{R}^+_{s_0, \true}$.
Let $R \subseteq \set{s_1, \ldots, s_k}$ be the set of all $s_i$ such that $\rho$ passes through 
$\mathcal{R}^-_{s_i,\true}$. 
Again by Claim~\ref{claim:udecr}, for every $s_i \in R$ 
the value of $\vr{u}$ is decreased by $s_i$ in component $\mathcal{R}^-_{s_i, \true}$, and
for every $s_i \notin R$ the value of $\vr{u}$ is preserved in component $\mathcal{R}^-_{s_i, \false}$.
Since $\vr{u}$ is zero-tested by the final \textbf{halt} command, the instance of the \subsum problem 
is necessarily positive.
%
%
\end{proof}


\begin{proof}[Proof of Lemma~\ref{lem:fractions}]
For $1 \leq i \leq k$ put $r_i := \frac{4^k + 2^{k-i}}{4^k}$, and observe the following
(straightforward) equalities:
\[
\Big( \frac{1}{r_i} \Big)^{2^1} \cdot \Big( \frac{1}{r_i} \Big)^{2^2} \cdot \ldots \cdot \Big( \frac{1}{r_i} \Big)^{2^{i-1}} \cdot r_i^{2^i}
\ = \ r_i^2.
\]
Multiplying all these equalities yields the equality:  
\begin{align} \label{eq:magic}
f_1^{2^1} \cdot f_2^{2^2} \cdot \ldots \cdot f_k^{2^k} \ = \ f, \qquad
%
\text{ where } \quad
f_i \ = \  \frac{r_i}{r_{i+1} \cdot \ldots \cdot r_k}
\qquad\quad
f \ = \ \left(r_1 \cdot \ldots \cdot r_k\right)^2.
\end{align}
As numerators and denominators of all $r_i$ are bounded by $4^{k+1}$,
numerators and denominators of all $f_i$ are bounded by $4^{k^2 + k}$,
and numerator and denominator of $f$ are boun\-ded by $4^{2(k^2 + k)}$,
as required.

It remains to argue that the (in)equalities~\eqref{eq:increasing} hold. 
We notice the following relation between $r_i$ and $r_{i-1}$, for $1 < i \leq k$:
\begin{equation}\label{eq:square}
r_i^2 \ = \ \Big( 1 + \frac{2^{k-i}}{4^k} \Big)^2 \ > \ 1 + \frac{2^{k+1-i}}{4^k} \ = \ r_{i-1},
\end{equation}
which implies
\[
\frac{f_{i}}{f_{i-1}} \ = \ 
\frac{r_{i} \cdot (r_{i} \cdot \ldots \cdot r_k)}
{r_{i-1} \cdot  (r_{i+1} \cdot \ldots \cdot r_k)} 
\ = \ \frac{r_{i}^2}{r_{i-1}} \ > \ 1
\]
and hence $f_1 < f_2 < \ldots < f_k$. 
For $i = k$ we have $f_k = r_k = 1 + \frac{1}{4^k}$.  
It thus remains to show $f_1 > 1$,   
which is equivalent to
\begin{align} \label{eq:toprove}
r_1 > r_{2} \cdot \ldots \cdot r_k.
\end{align}
By~\eqref{eq:square} we deduce
$
r_k^{2^i}  >  r_{k-i},
$
by induction on $i$, which implies the following inequality:
\[
r_k^{2^{k-1} - 1} \ = \ r_k^{1 + 2 + 4 + \ldots + 2^{k-2}} \ > \ r_{2} \cdot \ldots \cdot r_k.
\]
For~\eqref{eq:toprove} it suffices to show,
relying on the above inequality, that
$r_1  >   r_k^{2^{k-1} - 1}.$
Put $N := 2^{k-1}-1$ for convenience. We thus need to prove:
\begin{align} \label{eq1}
r_1 \ > \  \Big(1 + \frac{1}{4^k}\Big)^N.
\end{align}
By inspecting the expansion of the right-hand side 
\[
\Big(1 + \frac{1}{4^k}\Big)^N \ = \ \sum_{i = 0}^N \ {N \choose i} \cdot \frac{1}{4^{ik}} 
\]
we observe that the right-hand side is bounded
by the sum of first $N$ elements of a geometric progression, which, in turn, is bounded by the sum of the whole
infinite one:
\[
\Big( 1 + \frac{1}{4^k}  \Big)^N \ \leq \ 1 + \frac{N}{4^k} + \frac{N^2}{4^{2k}}
+ \ldots + \frac{N^N}{4^{Nk}} \ < \ \frac{1}{1 - \frac{N}{4^k}}.
\] 
Thus for showing~\eqref{eq1} it is sufficient to prove the inequality
$
r_1  >  \frac{1}{1 - \frac{N}{4^k}},
$
which is equivalent to
\[
\Big( 1 - \frac{2^{k-1}-1}{4^k} \Big) \Big( 1 + \frac{2^{k-1}}{4^k} \Big) \ > \ 1.
\]
The latter inequality is easily verified to hold true as
\[
\frac{1}{4^k} \ > \  \frac{2^{k-i}-1}{4^k} \cdot  \frac{2^{k-i}}{4^k}.
\]
The inequality~\eqref{eq1} is proved, and hence so is Lemma~\ref{lem:fractions}.
\end{proof}


\end{document}